\newtheorem{theorem}{Theorem}
\newtheorem{lemma}[theorem]{Lemma}
\pgfplotsset{compat=1.17}
\newcommand{\mypar}[1]{\medskip

\noindent {\bf #1}}
\title{Abstract morphing using the Hausdorff distance\\and Voronoi diagrams\footnote{Research was funded by NWO TOP grant no.~612.001.651.}}
\author{Lex de Kogel \and Marc van Kreveld \and Jordi L. Vermeulen}
\date{}
\renewcommand{\phi}{\varphi}
\renewcommand{\dh}{d_H}
\newcommand{\dhd}{d_{\vec{H}}}
\newcommand{\dilate}[1]{#1^\oplus}
\newcommand{\norm}[1]{\left|#1\right|}
\newcommand{\seg}[1]{\overline{#1}}
\newcommand{\comp}[1]{\#C(#1)}
\DeclareMathOperator{\partition}{Par}
\definecolor{nicered}{rgb}{0.737,0.11,0.11}
\definecolor{niceblue}{rgb}{0.11,0.11,0.737}
\definecolor{nicegreen}{rgb}{0.11,0.737,0.11}
\begin{document}

\maketitle

\begin{abstract}
This paper introduces two new abstract morphs for
two $2$-dimensional shapes. The intermediate shapes gradually reduce the Hausdorff distance to the goal shape and increase the Hausdorff distance to the initial shape. The morphs are conceptually simple and apply to shapes with multiple components and/or holes.
We prove some basic properties relating to continuity, containment, and area. Then we give an experimental analysis that includes the two new morphs and a recently introduced abstract morph that
is also based on the Hausdorff distance~\cite{van2022between}. We show results on the area and perimeter development throughout the morph, and also the number of components and holes. A visual comparison shows that one of the new morphs appears most attractive.
\end{abstract}

\section{Introduction}

Morphing, also referred to as shape interpolation, is the changing of a given shape into a target shape over time. Applications include animation and medical imaging.
Animation is often motivated by the film industry, where morphing can be used to create cartoons or visual effects. In medical imaging, the objective is a 3D reconstruction from cross-sections, such as those from MRI or CT scans. Reconstruction between two 2D slices is essentially 2D interpolation between shapes, which is a form of morphing.
We regard morphing itself as the change of one shape into another shape by a parameter, or, more precisely, a function from the interval $[0,1]$ to shapes in a space, such that the image at $0$ is the one input shape and the image at $1$ is the other input shape. It is often convenient to see the morphing parameter as time. In the rest of this paper, we will refer to the shape of the morph at any particular time value as an \emph{intermediate shape}. See \cref{fig:example-morph} for an example of two halfway shapes between polygons resembling a butterfly and a spider.

\begin{figure}[!t]
    \centering
    \begin{minipage}{0.25\textwidth}
        \includegraphics[width=\textwidth,clip=true,trim=0 0 0.1cm 0.3cm]{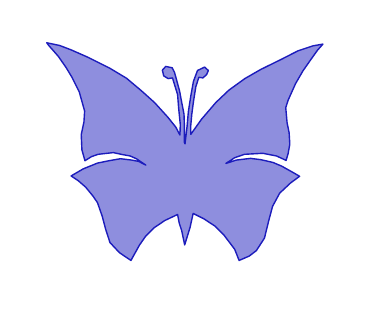}

\vspace*{-2mm}

        \includegraphics[width=\textwidth,clip=true,trim=0 0 0.1cm 0.3cm]{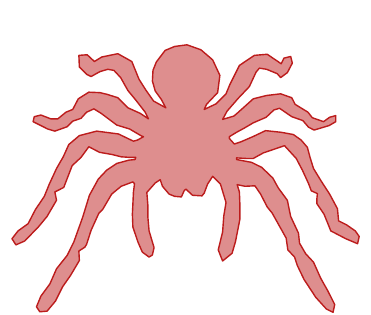}
    \end{minipage}%
    \begin{minipage}{0.75\textwidth}
        \includegraphics[width=0.5\textwidth]{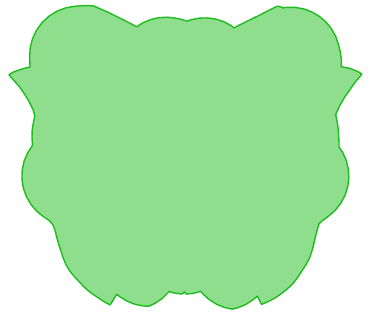}%
        \includegraphics[width=0.5\textwidth]{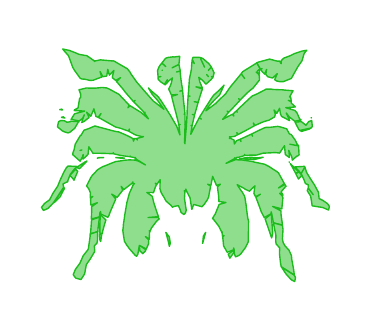}%
    \end{minipage}
    \caption{The intermediate shapes of two different morphing methods at time value \(1/2\) when morphing between the input shapes on the left. The middle shows the dilation morph (introduced in \cite{van2022between}), the right shows the Voronoi morph (introduced in this paper).}
    \label{fig:example-morph}
\end{figure}

The quality of a morph depends on the application. For medical imaging, the implied 3D reconstruction must be anatomically plausible. For morphing between two drawings of a cartoon character, the shapes in between must keep the dimensions of the limbs, for example. Furthermore, semantically meaningful features (nose, chin) should morph from their position in the one shape to their position in the other shape.

In this paper we concentrate on abstract morphing of shapes. A morphing task is abstract if there is no (semantic) reason to transform certain parts of a starting shape into certain parts of a goal shape.
In a recent paper,
van Kreveld et al.\ \cite{van2022between} presented a new type of abstract morphing based on the Hausdorff distance. It takes any two compact planar shapes \(A\) and \(B\) as input, and produces a morphed shape that interpolates smoothly between them. For a time value \(\alpha \in [0, 1]\), this morph is equal to \(A\) at \(\alpha = 0\) and to \(B\) at \(\alpha = 1\). For any value of $\alpha$ it has Hausdorff distance \(\alpha\) to \(A\) and Hausdorff distance \(1 - \alpha\) to \(B\), if the initial Hausdorff distance is $1$ (the input can be scaled to make this true without changing intermediate shapes). Morphs with this property are called \emph{Hausdorff morphs}~\cite{van2022between}. The Hausdorff morph introduced by van Kreveld et al.\ is based on Minkowski sums with a disk, and hence we refer to this specific one as the \emph{dilation morph}.

While the dilation morph has nice theoretical properties, in practice it will often grow intermediate shapes from \(A\) until \(\alpha = 1/2\), at which point the greatly dilated shape will shrink back towards~\(B\). For \(\alpha\) close to \(1/2\), the morphed shape typically resembles neither of the input shapes unless they already looked alike. We can see this in \cref{fig:example-morph}.

In this paper we present a new Hausdorff morph called \emph{Voronoi morph} that gives a more visually convincing morph, while maintaining many of the properties of the previous work. Our morph uses Voronoi diagrams to partition each input shape into regions with the same closest point on the other shape, and then scales and moves each such region to that closest point based on the value of \(\alpha\). We show that the Voronoi morph is also a Hausdorff morph. It interpolates smoothly between \(A\) and \(B\), but does not have the same problem of significantly increasing the area during the morph. We also present a variant called \emph{mixed morph} that reduces the problem of unnecessarily increasing the perimeter of the interpolated shape. It uses dilation and erosion to overcome some shortcomings of the Voronoi morph.

\mypar{Related work.}
The Hausdorff distance is a widely used distance metric that can be used for any two subsets of a space. It is a bottleneck measure: only a maximum distance determines the Hausdorff distance.
Efficient algorithms to compute the Hausdorff distance between two simple polygons or their higher-dimensional equivalents exist~\cite{alt95,alt03,atallah83}.
The Hausdorff distance is used in computer vision~\cite{dubuisson1994modified} and computer graphics~\cite{aspert2002mesh,cignoni1998metro} for template matching, and the computation of error between a model and its simplification.

Several algorithmic approaches to morphing have been described. Many of these are motivated by shape interpolation between slices (e.g.,~\cite{albu2008morphology,barequet04,barequet96,boissonnat88}, an overview can be found in \cite{barequet2009reconstruction}).
Other papers discuss morphing explicitly and not as an interpolation problem. 
Many of these results use compatible triangulations~\cite{gotsman01,LIU201860}, in particular those that avoid self-intersections.
It is beyond the scope of this paper to give a complete overview of morphing methods. For (not so recent) surveys of shape matching, interpolation, and correspondence, see~\cite{alt00,van2011survey}.
Our paper builds upon the morphing approach given in~\cite{van2022between}, which introduced Hausdorff morphs as a new technique for abstract morphing, and the dilation morph as a specific example of a Hausdorff morph.

Another shape similarity measure than the Hausdorff distance, the \emph{Fr\'echet} distance, can also be used to define a morph. In particular, \emph{locally correct Fr\'echet matchings} \cite{buchin2019locally} immediately imply a smooth transition of one shape outline into another, because they match all pairs of points on the two curves. Similar approaches were given in \cite{maheshwari2018approximating,rote2014lexicographic}. During the transition, however, the outline may be self-intersecting. This problem was addressed in~\cite{buchin2017frechet,chambers2011isotopic}.
A more important shortcoming of morphing using the Fr\'echet distance is that it is unclear how to morph between shapes with different numbers of components and holes.

Much of the commercial software for morphing applies to images, with or without additional human control. Other software is meant as toolkits for designers to design their own morphs, most notably Adobe After Effects.

\mypar{Our results.}
We introduce two new abstract morphs based on the
Hausdorff distance. They are---just like the dilation morph---conceptually simple and easy to implement if one has code for Minkowski sum and difference with a disk, Voronoi diagrams, and polygon intersection and union.
We examine basic properties of the two new morphs
and compare how they relate to the dilation morph.
In particular, we show that the Voronoi morph is a Hausdorff morph and that it is 1-Lipschitz continuous. We also show that for any morphing parameter (time), the Voronoi-morph intermediate shape is a subset of the mixed-morph intermediate shape, which in turn is a subset of the dilation-morph intermediate shape.

We then proceed with an extensive experimental analysis where we compare four basic quantities: area, perimeter, number of components, and number of holes. We show how these quantities develop throughout the three morphs. We also present visual results. As data we use simple drawings of animals, country outlines, and text (letters and whole words).

\section{Preliminaries}
Given two sets \(A\) and \(B\), we can define the directed Hausdorff distance from \(A\) to \(B\) as
\begin{equation*}\label{def:directed-hausdorff-distance}
    \dhd(A, B) \coloneqq \adjustlimits\sup_{a \in A} \inf_{b \in B} d(a, b) \text{,}
\end{equation*}
where \(d\) denotes the Euclidean distance. The \emph{undirected Hausdorff distance} between \(A\) and \(B\) is then defined as the maximum of both directed distances:
\begin{equation*}\label{def:hausdorff-distance}
    \dh(A, B) \coloneqq \max(\dhd(A, B),\; \dhd(B, A)) \text{.}
\end{equation*}

When \(A\) and \(B\) are closed sets, we can alternatively define the Hausdorff distance using Minkowski sums. Recall that the Minkowski sum \(A \oplus B\) is defined as
 $\{a + b~|~a \in A,~b \in B\}$;
the directed Hausdorff distance between \(A\) and \(B\) is then the smallest value \(r\) for which \(A \subseteq B \oplus D_r\), where \(D_r\) is a disk of radius \(r\).

Van Kreveld et al.~\cite{van2022between} then define a function that interpolates between two shapes in a Hausdorff sense: For any time parameter \(\alpha \in [0, 1]\), they define the dilation morph
\begin{equation*}\label{def:dilation-morph}
    S_\alpha(A, B) \coloneqq (A \oplus D_\alpha ) \cap (B \oplus D_{1 - \alpha}) \text{,}
\end{equation*}
and prove that this shape has Hausdorff distance \(\alpha\) to \(A\) and \(1 - \alpha\) to \(B\), and that it is the maximal shape with this property. Additionally, they show that this morph is 1-Lipschitz continuous: for two time parameters \(\alpha\) and \(\beta\), \(\dh(S_\alpha(A, B), S_\beta(A, B)) \leq \norm{\beta - \alpha}\). Note that we will omit the arguments $A,B\,$ when they are clear from context.

Structurally, it turns out that the intermediate shapes may have quadratic complexity, even when the input is two simple polygons with $n$ vertices each. For instance, if the input consists of a horizontal comb and an overlapping vertical comb, each with \(n/4\) prongs, \(S_\alpha\) will consist of \(\Omega(n^2)\) components for any \(\alpha \in (0, 1)\). In fact, this is not limited to the dilation morph: \emph{any} intermediate shape with Hausdorff distance \(\alpha\) to \(A\) and \(1 - \alpha\) to \(B\) will have \(\Omega(n^2)\) components~\cite{van2022between}, so every Hausdorff morph has this feature.

Note that both \(S_\alpha\) and the morphing methods described below change when we translate one of the input shapes. That is, if we write \(t + B\) to be the translation of \(B\) over a vector $t$, it is not true that \(S_\alpha(A, t + B) = \alpha * t + S_\alpha(A, B)\), as one might expect. This is because the positions of the input shapes are important both for the Hausdorff distance and for the shape of \(S_\alpha\). However, we can simply calculate \(S_\alpha(A, B)\) with \(A\) and \(B\) aligned in some way, and then generate \(S_\alpha(A, t + B)\) by explicitly translating \(S_\alpha(A, B)\) by \(\alpha * t\). Sensible alignment methods include aligning the centroids, maximising the overlap of the shapes, and minimising the Hausdorff distance.

\section{Voronoi morph}
As demonstrated in \cref{fig:example-morph}, one of the problems with the dilation morph is that the intermediate shapes tend to lose any resemblance to the input during the morphing process. The main reason for this is that the dilated shapes we are intersecting contain many points that do not influence the Hausdorff distance in any way, because they are not on the shortest path from a point on one shape to the closest point on the other. In other words, much of \(S_\alpha\) can be removed without changing the Hausdorff distance to and from the input. That said, there is no obvious ``correct'' way to determine which parts should be removed to obtain the greatest resemblance to the input.

We propose a morph in which we only take the points of \(S_\alpha\) that are on the shortest path between points in one input shape and the closest point on the other. Specifically, we only take the points where the ratio of distances to the one shape and the closest point on the other is \(\alpha : 1 - \alpha\). More formally, we define our new morph \(T_\alpha\) as follows:
\begin{equation*}\label{def:voronoi-morph}
    T_\alpha(A, B) \coloneqq \{a + \alpha(c(a, B) - a)~|~a \in A\} \cup \{b + (1 - \alpha)(c(b, A) - b)~|~b \in B\} \text{,}
\end{equation*}
where \(c(a, B)\) denotes the point on \(B\) closest to \(a\). In other words, we move each point in \(A\) closer to the closest point in \(B\) by a fraction \(\alpha\) of that distance, and each point in \(B\) closer to the closest point in \(A\) by a fraction \(1 - \alpha\), and take the union of those two shapes. If a point is equidistant to multiple points in the other shape, we include all options. We can prove that this morph has the desired Hausdorff distances to the input.

\begin{theorem}\label{thm:voronoi-morph-hausdorff-bounds}
    Let \(A\) and \(B\) be two compact sets in the plane with \(\dh(A, B) = 1\). Then for any \(0\leq \alpha \leq 1\), we have \(\dh(A, T_\alpha) = \alpha\) and \(\dh(B, T_\alpha) = 1 - \alpha\).
\end{theorem}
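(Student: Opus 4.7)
The plan is to reduce to proving $\dh(A, T_\alpha) = \alpha$ by symmetry (swapping $(A,\alpha) \leftrightarrow (B, 1-\alpha)$ exchanges the two halves of the definition of $T_\alpha$, so the companion identity $\dh(B, T_\alpha) = 1-\alpha$ will follow from the same argument). I will split $T_\alpha = U \cup V$ with $U = \{(1-\alpha)a + \alpha c(a, B) : a \in A\}$ and $V = \{\alpha b + (1-\alpha) c(b, A) : b \in B\}$, and then treat the two directed distances $\dhd(A, T_\alpha)$ and $\dhd(T_\alpha, A)$ separately.

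For the upper bound $\dh(A, T_\alpha) \le \alpha$, I would exhibit an explicit witness on each side. Each $a \in A$ has the companion point $(1-\alpha)a + \alpha c(a, B) \in U \subseteq T_\alpha$ at distance $\alpha\, d(a, B) \le \alpha$, and each $p \in U$ (resp.\ $p \in V$) lies within $\alpha$ of its originating $a \in A$ (resp.\ of $c(b, A) \in A$). The same algebra also yields the side fact $d(p, B) \le 1 - \alpha$ for every $p \in T_\alpha$, which I will invoke in the lower bound.

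For the lower bound, the hypothesis $\dh(A, B) = 1$ forces $\dhd(A, B) = 1$ or $\dhd(B, A) = 1$, with the realising supremum attained by compactness, and I would case-split accordingly. If $\dhd(A, B) = 1$, pick a maximising $a_0 \in A$ with $d(a_0, B) = 1$; then for each $p \in T_\alpha$ the 1-Lipschitz property of $d(\cdot, B)$ gives $\norm{a_0 - p} \ge d(a_0, B) - d(p, B) \ge 1 - (1-\alpha) = \alpha$, so $\dhd(A, T_\alpha) \ge \alpha$. If instead $\dhd(B, A) = 1$, pick a maximising $b_0 \in B$ and the point $v_0 = \alpha b_0 + (1-\alpha) c(b_0, A) \in V$; a single triangle inequality $\norm{v_0 - a'} \ge \norm{b_0 - a'} - \norm{v_0 - b_0} \ge 1 - (1-\alpha) = \alpha$ for every $a' \in A$ yields $\dhd(T_\alpha, A) \ge \alpha$.

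The main obstacle is precisely this case split: the natural witness $a_0 \in A$ only bounds $\dhd(A, T_\alpha)$ from below when $A$ genuinely contains a point far from $B$, so the symmetric $V$-side witness is needed whenever $\dh(A, B)$ is realised in the other direction. Once the correct witness is chosen, the remainder is routine bookkeeping with the explicit parametrisations of $U$ and $V$ and the triangle inequality.
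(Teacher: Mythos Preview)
Your argument is correct and follows essentially the same route as the paper: both establish the upper bound $\dh(A,T_\alpha)\le\alpha$ by exhibiting explicit witnesses from the two halves $U,V$ of $T_\alpha$, and both obtain equality by case-splitting on which directed Hausdorff distance realises $\dh(A,B)=1$ and invoking the realising point. Your version is in fact more explicit at the key step---where the paper simply asserts that $\hat t$ is the closest point of $T_\alpha$ to $\hat a$, you supply the justification via the auxiliary bound $d(p,B)\le 1-\alpha$ and the $1$-Lipschitz property of $d(\cdot,B)$.
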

\begin{proof}
    We first show that \(\dh(A, T_\alpha) \leq \alpha\), and then show strict equality. The case for \(\dh(B, T_\alpha)\) is analogous and therefore omitted.

    By construction, any point \(a \in A\) has a point at distance \(\leq \alpha\) in \(T_\alpha\), showing that \(\dhd(A, T_\alpha) \leq \alpha\). Similarly, by construction, for each point \(b \in B\) there is a point \(t_b \in T_\alpha\) such that \(t_b = (1 - \alpha)(c(b, A) - b)\). As \(d(b, c(b, A)) \leq 1\), it must be the case that \(t_b\) has distance at most \(\alpha\) to \(c(b, A)\). It follows that all points in \(T_\alpha\) have distance at most \(\alpha\) to a point in \(A\), thereby showing that \(\dhd(T_\alpha, A) \leq \alpha\). As we have both \(\dhd(A, T_\alpha) \leq \alpha\) and \(\dhd(T_\alpha, A) \leq \alpha\), it follows that \(\dh(A, T_\alpha) \leq \alpha\).

    To show strict equality, assume the Hausdorff distance is realised by some point \(\hat{a} \in A\) with closest point \(\hat{b} \in B\), i.e., \(d(\hat{a}, \hat{b}) = 1\). By construction, there is a point \(\hat{t} \in T_\alpha\) at distance \(\alpha\) from \(\hat{a}\) and at distance \(1 - \alpha\) from \(\hat{b}\). As \(\hat{t}\) is the closest point to \(\hat{a}\) in \(T_\alpha\), we have \(\dh(A, T_\alpha) = \alpha\), and as \(\hat{b}\) is the closest point to \(\hat{t}\) in \(B\), we have \(\dh(B, T_\alpha) = 1 - \alpha\).
    If the Hausdorff distance is realised by a point on~\(B\), we use a symmetric argument.
\end{proof}

We can additionally show that the Voronoi morph, like the dilation morph, is 1-Lipschitz continuous in a Hausdorff sense:

\begin{lemma}\label{lem:voronoi-morph-lipschitz}
    Let \(\alpha, \beta \in [0, 1]\). Then \(\dh(T_\alpha, T_\beta) \leq \norm{\beta - \alpha}\).
\end{lemma}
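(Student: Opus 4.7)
The plan is to bound each directed Hausdorff distance separately by constructing, for every point in $T_\alpha$, a nearby witness in $T_\beta$ that shares the same generating input point but uses the other morph parameter. Symmetry between $\alpha$ and $\beta$ then closes the argument.

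First I would observe that by the definition of $T_\alpha$, every point $p \in T_\alpha$ is of one of two forms: either $p = a + \alpha(c(a,B) - a)$ for some $a \in A$, or $p = b + (1-\alpha)(c(b,A) - b)$ for some $b \in B$ (picking some fixed closest point in the equidistant case). Given such a $p$ of the first type, the natural candidate witness in $T_\beta$ is $p' = a + \beta(c(a,B) - a)$, obtained by replacing $\alpha$ with $\beta$ while keeping $a$ and its chosen closest point fixed. Then $d(p,p') = |\beta-\alpha|\cdot d(a, c(a,B))$, and since $d(a, c(a,B)) = \inf_{b\in B} d(a,b) \le \dhd(A,B) \le \dh(A,B) = 1$, we get $d(p,p') \le |\beta-\alpha|$. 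The second case is entirely analogous: for $p = b + (1-\alpha)(c(b,A)-b) \in T_\alpha$, take $p' = b + (1-\beta)(c(b,A)-b) \in T_\beta$ and use $d(b, c(b,A)) \le 1$ to conclude $d(p,p') \le |\beta-\alpha|$.

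Combining both cases shows $\dhd(T_\alpha, T_\beta) \le |\beta-\alpha|$. Swapping the roles of $\alpha$ and $\beta$ in the same construction gives $\dhd(T_\beta, T_\alpha) \le |\beta-\alpha|$, and taking the maximum yields the claimed bound $\dh(T_\alpha, T_\beta) \le |\beta-\alpha|$.

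There is no serious obstacle here; the argument is just careful bookkeeping. The only mildly delicate points are (i) making sure that when a point $a \in A$ has several nearest neighbours in $B$, we use the \emph{same} choice of $c(a,B)$ when defining both $p \in T_\alpha$ and $p' \in T_\beta$ (this is fine because the definition of $T_\alpha$ includes all such options, so both the source and the witness exist in their respective sets), and (ii) remembering to invoke the standing normalisation $\dh(A,B) = 1$ from Theorem~\ref{thm:voronoi-morph-hausdorff-bounds} to control $d(a,c(a,B))$ and $d(b,c(b,A))$ by $1$.
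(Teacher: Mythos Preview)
Your proposal is correct and follows essentially the same approach as the paper: for each point of $T_\alpha$ you exhibit a witness in $T_\beta$ obtained from the same generating input point with the other parameter, bound the distance using $\dh(A,B)=1$, and conclude by symmetry. The paper's proof is slightly terser (it handles only one of the two cases explicitly and leaves the other as analogous), while you are a bit more explicit about the equidistant case, but there is no substantive difference.
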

\begin{proof}
    Let \(t_\alpha\) be any point on \(T_\alpha\). Assume without loss of generality that there is some \(a \in A\) such that \(t_\alpha = a + \alpha(c(a, B) - a)\) (the case for \(t_\alpha\) being included due to a point in \(B\) is analogous). Now consider the point \(t_\beta = a + \beta(c(a, B) - a)\): \(t_\alpha\) and \(t_\beta\) are on the same straight line segment between \(a\) and \(c(a, B)\), and have distance \(\norm{\beta - \alpha} \cdot \norm{c(a, B) - a}\) to each other. As \(\dh(A, B) = 1\), we know that \(\norm{c(a, B) - a} \leq 1\), and therefore that \(\norm{t_\beta - t_\alpha} \leq \norm{\beta - \alpha}\). This holds for any \(t_\alpha \in T_\alpha\), and the argument is symmetric for \(T_\beta\).
\end{proof}

\noindent
Note that this type of continuity implies that components of \(T_\alpha\) can only form or disappear by merging with or splitting from another component.

In addition to the Hausdorff distance-related properties, it is also interesting to study the general geometric and topological properties of \(T_\alpha\). We first show that the number of components \(\comp{T_\alpha}\) of \(T_\alpha\) does not change during the morph, except possibly at \(\alpha = 0\) and \(\alpha = 1\). We prove this for the case of polygonal input; the proof can likely be generalised, but the formalisation is somewhat tedious and not particularly interesting.

\begin{figure}
    \centering
    \includegraphics[page=1]{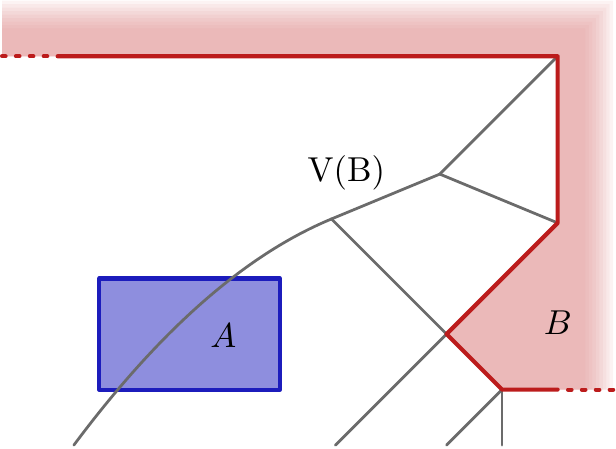}
    \hfill
    \includegraphics[page=2]{figures/voronoi-morph-example.pdf}
    \caption{On the left, \(A\) is partitioned by the Voronoi diagram \(V(B)\) of \(B\). On the right, each partitioned part of \(A\), shown in green, is scaled towards the closest point on \(B\) by a factor \(\alpha\).}
    \label{fig:voronoi-morph-example}
\end{figure}

Let \(V(A)\) be the Voronoi diagram of the vertices, open edges and the interior components of \(A\). We now define \(\partition(A, B)\) to be the input shape \(A\) partitioned into regions by \(V(B)\). Note that \(\partition(A, B)\) is a set of regions of \(A\) that each have the closest point of \(B\) on the same vertex, edge or face of \(B\). For some region \(P \in \partition(A, B)\), let \(P_\alpha\) be the region obtained by scaling \(P\) towards the site of the Voronoi cell of \(B\) it is in by a factor \(\alpha\). If this site is a vertex, we simply scale \(P\) uniformly towards it; if it is an edge, we scale it perpendicular to the supporting line of that edge; and if it is a face, it does not scale or move at all; see \cref{fig:voronoi-morph-example} for an illustration. Now let \(\partition_\alpha(A, B) \coloneqq \{P_\alpha~|~P \in \partition(A, B)\}\). Note that \(T_\alpha\) is the union of all elements of \(\partition_\alpha(A, B)\) and \(\partition_{1 - \alpha}(B, A)\).

\begin{lemma}
    Let \(0 < \alpha < \beta < 1\). Then \(\comp{T_\alpha} = \comp{T_\beta}\).
\end{lemma}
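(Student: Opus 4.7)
The plan is to reduce the component count of $T_\alpha$ to the number of connected components of a combinatorial intersection graph on the scaled pieces, and then to show that this graph is the same for every $\alpha \in (0,1)$. Let $\mathcal{F}(\alpha) := \partition_\alpha(A,B) \cup \partition_{1-\alpha}(B,A)$, and let $G(\alpha)$ be the graph whose vertices are the pieces in $\mathcal{F}(\alpha)$ and whose edges join pairs of pieces with nonempty intersection. Since each piece is connected, $\comp{T_\alpha}$ equals the number of connected components of $G(\alpha)$, and the lemma reduces to showing that $G(\alpha)$ has the same edge set for every $\alpha \in (0,1)$, under the natural identification of a scaled piece with its preimage in $A$ or $B$.

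For each $P \in \partition(A,B)$ with Voronoi site $\sigma_P \subseteq B$, the scaling rule yields an affine map $f^P_\alpha(p) = (1-\alpha)p + \alpha\,c(p,\sigma_P)$ that varies linearly in $\alpha$ and is injective on $[0,1)$: a homothety towards $\sigma_P$ if $\sigma_P$ is a vertex, an orthogonal contraction perpendicular to $\sigma_P$ if $\sigma_P$ is an open edge, and the identity if $\sigma_P$ is a face. The analogous map $g^Q_{1-\alpha}(q) = \alpha q + (1-\alpha)\,c(q,\sigma_Q)$ for $Q \in \partition(B,A)$ is injective on $(0,1]$. A short triangle-inequality computation shows that for $\alpha \in (0,1)$ the image $f^P_\alpha(P)$ lies in the interior of the Voronoi cell of $\sigma_P$, and symmetrically for $g^Q_{1-\alpha}$.

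I would then dispatch the three cases for a potential edge of $G(\alpha)$. If two pieces both come from $\partition(A,B)$ (or both from $\partition(B,A)$), the previous paragraph shows that for $\alpha \in (0,1)$ their images lie in the interiors of their respective Voronoi cells, which are disjoint if the cells differ; and if the cells coincide, the two pieces are distinct connected components of $A$ (or $B$) intersected with that cell, so their images under the common affine map are still disjoint. Hence $G(\alpha)$ contains no edge internal to the $A$-side or the $B$-side on $(0,1)$. For a cross edge between $P_\alpha$ with $P \in \partition(A,B)$ and $Q_{1-\alpha}$ with $Q \in \partition(B,A)$, the algebraic identity $(1-\alpha)p + \alpha q = f^P_\alpha(p) = g^Q_{1-\alpha}(q)$, valid whenever $c(p,B) = q$ and $c(q,A) = p$, shows that such an edge is present for every $\alpha \in (0,1)$ as soon as $P$ and $Q$ contain a mutually nearest pair, a condition that does not depend on $\alpha$.

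The main obstacle is the converse direction of the cross case: ruling out a transient intersection between $P_\alpha$ and $Q_{1-\alpha}$ at some isolated $\alpha^* \in (0,1)$ in the absence of a mutually nearest pair in $P \times Q$. Here I would view the intersection equation $f^P_\alpha(p) = g^Q_{1-\alpha}(q)$ as a linear equation in $\alpha$ and combine the injectivity of both affine maps on $(0,1)$ with the fact that any intersection point must lie simultaneously in the Voronoi cell of $\sigma_P \subseteq B$ and the Voronoi cell of $\sigma_Q \subseteq A$, thereby forcing the witnesses $p,q$ to satisfy the mutual-nearest condition independently of $\alpha$. Once each edge of $G(\alpha)$ is shown to be constant on $(0,1)$, the number of connected components of $G(\alpha)$ is constant, completing the proof.
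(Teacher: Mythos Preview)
Your approach is correct and is organised differently from the paper's. The paper argues by contradiction: if two scaled pieces are disjoint at time $\alpha$ but intersect at time $\beta$, then the intersection point lies on two distinct shortest segments to the other shape, and a single triangle-inequality swap shows one of the alleged closest points was not actually closest. It treats only the same-side case explicitly and declares the remaining cases ``identical''. You instead fix the combinatorics up front by showing that the intersection graph $G(\alpha)$ is independent of $\alpha$ on $(0,1)$; this is more informative (it tells you exactly which pieces are adjacent, not merely that adjacency does not change) and it does not need the Lipschitz lemma to rule out components appearing or vanishing. Your handling of the same-side case via the strict containment $f^P_\alpha(P)\subset \operatorname{int}\bigl(\text{Voronoi cell of }\sigma_P\bigr)$ is also cleaner than the paper's; that containment really is a one-line strict triangle inequality.

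The one place your write-up is still a sketch is exactly where you flag it: the converse direction of the cross case. The sentence ``any intersection point must lie simultaneously in the Voronoi cell of $\sigma_P$ and the Voronoi cell of $\sigma_Q$, thereby forcing the witnesses $p,q$ to satisfy the mutual-nearest condition'' is correct in spirit but does not yet close the argument, because the Voronoi-cell membership is a statement about the closest points of the \emph{intersection point} $x$, not of $p$ and $q$. What finishes it is precisely the paper's swap: from $x=f^P_\alpha(p)=g^Q_{1-\alpha}(q)$ one has $\lvert xp\rvert=\alpha\,d(p,B)$ and $\lvert xq\rvert=(1-\alpha)\,d(q,A)$, and bounding $d(p,B)\le\lvert pq\rvert\le\lvert xp\rvert+\lvert xq\rvert$ and symmetrically $d(q,A)\le\lvert pq\rvert$ forces $d(p,B)=d(q,A)$ with all inequalities tight, hence $q=c(p,B)$ and $p=c(q,A)$. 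That yields the mutually nearest pair independent of $\alpha$, and your cross-edge criterion becomes an equivalence. With that step written out, your proof is complete.
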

\begin{proof}
    Assume that \(\comp{T_\alpha} \neq \comp{T_\beta}\). We can assume without loss of generality that \(\comp{T_\alpha} > \comp{T_\beta}\), as in the other case we can take \(T_\alpha(B, A)\) instead of \(T_\alpha(A, B)\) and get the same morph, but parametrised in reverse. We can also assume that for fixed \(\alpha\), \(\beta\) is the smallest value such that \(\comp{T_\alpha} > \comp{T_\beta}\). In this case, there are two regions \(P\) and \(Q\) of \(\partition(A, B)\) or \(\partition(B, A)\) that are disjoint and in different components of \(T_\alpha\), but intersect and are in the same component of \(T_\beta\). This is because, as a consequence of \cref{lem:voronoi-morph-lipschitz}, components cannot appear or disappear. In the following we assume \(P, Q \in \partition(A, B)\); the arguments for when one or both are in \(\partition(B, A)\) are identical.

    As \(P_\beta \cap Q_\beta \neq \emptyset\), there must be some point \(p\) in both \(P_\beta\) and \(Q_\beta\). As both \(P_\beta\) and \(Q_\beta\) are formed by regions moving towards the closest point on the other shape, this point is then on the intersection of two shortest paths between \(A\) and \(B\). Let \(a_1\), \(b_1\), \(a_2\) and \(b_2\) be the endpoints of these paths intersecting in \(p\). One of the two segments \(\seg{pb_1}\), \(\seg{pb_2}\) will be the shortest; assume without loss of generality that it is \(\seg{pb_1}\). In this case the path \(a_2pb_1\) is shorter than \(a_2pb_2\), and by the triangle inequality \(b_1\) must be closer to \(a_2\) than \(b_2\).

    This contradicts the assumption that \(b_2\) was the closest point to \(a_2\). We conclude that such shortest paths can never intersect, and therefore \(P_\alpha \cap Q_\alpha = \emptyset\) for any \(\alpha \in (0, 1)\). As such, components can never merge or split for \(\alpha \in (0, 1)\), and as they also cannot appear or disappear by \cref{lem:voronoi-morph-lipschitz}, the statement in the lemma follows.
\end{proof}

\noindent
Note that the number of components can change at \(\alpha = 0\) or \(\alpha = 1\), as in these limit cases elements of \(\partition_\alpha(A, B)\) and \(\partition_\alpha(B, A)\) turn into points or line segments. Using the strategy from this proof, it also follows that \(\partition_\alpha(A, B)\) and \(\partition_{1 - \alpha}(B, A)\) are interior-disjoint. An interesting corollary of this observation is that the area \(\norm{T_\alpha}\) of \(T_\alpha\) is bounded from below by \((1 - \alpha)^2 \norm{A} + \alpha^2 \norm{B}\), which is attained when both shapes are disjoint and all parts are moving to a finite number of points (vertices) on the other shape.

\subsection{A variant morph}
One problem with the Voronoi morph is that it can introduce many slits into the boundary, thereby greatly increasing the perimeter of the shape. This is because parts of the input that have different closest points on the other shape will tend to move away from each other. We present a variant of the Voronoi morph that tries to reduce these problems. As it uses both the Voronoi morph and the dilation morph, we call this variant the \emph{mixed morph}. The mixed morph \(M_{\alpha,\phi}\) is defined as follows:
\begin{equation*}
    M_{\alpha,\phi}(A, B) \coloneqq ((T_\alpha(A, B) \oplus D_\phi) \ominus D_\phi) \cap S_\alpha \text{,}
\end{equation*}
where \(\ominus\) is the Minkowski difference, defined as \(A \ominus B \coloneqq (A^c \oplus B)^c\), where \(A^c\) is the complement of \(A\). Taking a Minkowski sum with a disk is also known as \emph{dilation}, and the Minkowski difference with a disk is known as \emph{erosion}. Performing first a dilation and then an erosion with disks of the same radius is known as \emph{closing}, and can be used to close small gaps and holes in a shape without modifying the rest too much. The closing operator is widely used and studied in the field of image analysis~\cite{haralick1987image}.

The resulting morph may no longer be a Hausdorff morph: we may have increased the Hausdorff distance by closing certain gaps or holes. We therefore intersect the closed version of \(T_\alpha\) with the dilation morph \(S_\alpha\), so that gaps that are necessary to obtain the appropriate Hausdorff distance are maintained. This results in the mixed morph \( M_{\alpha,\phi}\).

The mixed morph has a new parameter, \(\phi\), being the radius of the disk used in the closing. Note that \(M_{\alpha, 0} = T_\alpha\). We can show that \(M_{\alpha,\phi}\) contains all shapes obtained with the same \(\alpha\) but smaller value of \(\phi\):

\begin{lemma}\label{lem:mixed-morph-containment}
    Let \(\phi, \psi \in \mathbb{R}^+\) and \(\phi \leq \psi\). Then \(M_{\alpha,\phi} \subseteq M_{\alpha,\psi}\).
\end{lemma}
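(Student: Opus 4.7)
The plan is to reduce the stated inclusion to the morphological fact that closing by a larger disk produces a larger set. Since the definition of $M_{\alpha,\phi}$ intersects the closing $(T_\alpha \oplus D_\phi) \ominus D_\phi$ with the common set $S_\alpha$, and intersection with a fixed set preserves inclusion, it suffices to prove that for any compact set $X$ and any $0 \le \phi \le \psi$ we have $(X \oplus D_\phi) \ominus D_\phi \subseteq (X \oplus D_\psi) \ominus D_\psi$.

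First I would unfold the Minkowski difference concretely: using the paper's definition $A \ominus B = (A^c \oplus B)^c$, one checks that a point $p$ lies in $Y \ominus D_r$ if and only if the closed disk $D_r(p)$ of radius $r$ centred at $p$ is contained in $Y$. So fix $p$ with $D_\phi(p) \subseteq X \oplus D_\phi$ and pick an arbitrary $q \in D_\psi(p)$; the goal is to produce $x \in X$ with $\norm{q-x} \le \psi$. If $\norm{q-p} \le \phi$ then already $q \in X \oplus D_\phi \subseteq X \oplus D_\psi$. Otherwise let $q'$ be the point on the segment $\seg{pq}$ at distance exactly $\phi$ from $p$; since $q' \in D_\phi(p)$, there is some $x \in X$ with $\norm{q'-x} \le \phi$, and the triangle inequality gives
\[
\norm{q - x} \;\le\; \norm{q - q'} + \norm{q' - x} \;\le\; (\norm{q-p} - \phi) + \phi \;=\; \norm{q-p} \;\le\; \psi,
\]
so $q \in X \oplus D_\psi$, as required. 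This yields $D_\psi(p) \subseteq X \oplus D_\psi$, i.e.\ $p$ lies in the right-hand side.

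The main obstacle is only conceptual: replacing $\phi$ by a larger $\psi$ enlarges both the dilation and the subsequent erosion, so it is not obvious a priori that the net effect is growth. The triangle-inequality step above resolves this by showing that any witness in $X$ for a point on the small disk $D_\phi(p)$ can be reused, extended radially outward, as a witness for the corresponding point on the larger disk $D_\psi(p)$. Once the monotonicity of closing is in hand, the lemma follows immediately by intersecting both sides with $S_\alpha$.
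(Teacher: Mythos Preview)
Your proof is correct. Both you and the paper ultimately establish the monotonicity of morphological closing, $(X \oplus D_\phi) \ominus D_\phi \subseteq (X \oplus D_\psi) \ominus D_\psi$, via a triangle-inequality argument that routes through an intermediate point at distance $\phi$; the intersection with the fixed set $S_\alpha$ is handled identically. The difference is in the framing. The paper argues by contradiction: it assumes a point $p \in M_{\alpha,\phi} \setminus M_{\alpha,\psi}$, takes $q$ to be the nearest point on $\partial(T_\alpha \oplus D_\psi)$, and locates $q'$ where the segment $\seg{pq}$ crosses $\partial(T_\alpha \oplus D_\phi)$, deriving $d(p,q) \geq \phi + (\psi-\phi) = \psi$ for the contradiction. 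You instead argue directly, using the characterisation $p \in Y \ominus D_r \Leftrightarrow D_r(p) \subseteq Y$ and showing that every $q \in D_\psi(p)$ lies in $X \oplus D_\psi$ by radially pulling it back to $q' \in D_\phi(p)$. Your version is a bit cleaner---it avoids reasoning about boundaries of dilated sets and names the underlying fact (closing is increasing in the structuring element) explicitly---while the paper's version is more geometric in flavour but relies on the nearest-boundary-point construction and contains a small typo in its final inequality chain.
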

\begin{proof}
    Let us assume that \(M_{\alpha,\phi} \supset M_{\alpha,\psi}\) instead. Then there is some point \(p\) such that \(p \in M_{\alpha,\phi}\), but \(p \notin M_{\alpha,\psi}\). There are two reasons why \(p\) might not be in \(M_{\alpha,\psi}\): either \(p \notin T_\alpha \oplus D_\psi\), or \(p \in T_\alpha \oplus D_\psi\) but \(p \notin (T_\alpha \oplus D_\psi) \ominus D_\psi\).

    It can clearly not be the case that \(p \in M_{\alpha,\phi}\) but \(p \notin T_\alpha \oplus D_\psi\): \(M_{\alpha,\phi}\) is a subset of \(T_\alpha \oplus D_\phi\), and as \(\phi \leq \psi\), we have that \(T_\alpha \oplus D_\phi \subseteq T_\alpha \oplus D_\psi\).

    It must then be the case that \(p \in T_\alpha \oplus D_\psi\) but \(p \notin (T_\alpha \oplus D_\psi) \ominus D_\psi\). In this case, the distance between \(p\) and the boundary \(\partial\dilate{T_\alpha}\) of \(T_\alpha \oplus D_\psi\) must be less than \(\psi\). Let \(q \in \partial\dilate{T_\alpha}\) be the point on the boundary closest to \(p\). As \(p \in T_\alpha \oplus D_\phi\) and \(T_\alpha \oplus D_\phi \subseteq T_\alpha \oplus D_\psi\), the segment \(\seg{pq}\) must intersect the boundary of \(T_\alpha \oplus D_\phi\) in some point \(q'\). We must have that \(d(p,q') \geq \phi\), or \(p\) would not be in \(M_{\alpha,\phi}\), and we must have \(d(q,q') \geq \psi - \phi\), as \(T_\alpha \oplus D_\psi = (T_\alpha \oplus D_\phi) \oplus D_{\psi - \phi}\). But then, by the triangle inequality, \(d(p,q) \leq d(p,q') + d(q,q') \geq \psi\), which is a contradiction. Hence, \(p \in M_{\alpha,\psi}\). As this holds for all \(p \in M_{\alpha,\phi}\), the statement in the lemma follows.
\end{proof}

Note that this means we now have the following hierarchical containment of morphs: \(T_\alpha \subseteq M_{\alpha,\phi} \subseteq M_{\alpha,\psi} \subseteq S_\alpha\), for \(\phi \leq \psi\). As \(T_\alpha\) is a Hausdorff morph, and \(S_\alpha\) is the maximal Hausdorff morph, this shows that \(M_{\alpha,\phi}\) is a Hausdorff morph as well. However, \(M_{\alpha,\phi}\) is not 1-Lipschitz continuous: components may suddenly merge when their distance falls below \(2\phi\).

\subsection{Algorithm}
To give an algorithm for computing \(T_\alpha\), we assume \(A\) and \(B\) are (sets of) polygons, possibly with holes. As \(T_\alpha\) is based on moving all points on the one shape to the closest point on the other shape, we can compute the Voronoi diagram of each input shape, and then use these to partition the other shapes. This gives us a partitioning of \(A\) into pieces that overlap \(B\), or have the same closest point or edge on \(B\), and vice versa. 
Pieces of \(A\) completely inside \(B\) are unchanged, pieces with a vertex as closest element are scaled uniformly towards that vertex by a factor \(\alpha\), and pieces with an edge as closest element are scaled perpendicular to the supporting line of that edge by a factor \(\alpha\). For pieces of \(B\) we do the same, except that we scale them with a factor \(1 - \alpha\). \cref{fig:voronoi-morph-example} shows an example of how a shape \(A\) is partitioned by the Voronoi diagram \(V(B)\) of \(B\), and each piece is scaled towards the closest point on \(B\).

Given this algorithm, we can also straightforwardly compute \(M_{\alpha,\phi}\) by computing \(T_\alpha\) and \(S_\alpha\), dilating and eroding \(T_\alpha\) by a distance \(\phi\), and then intersecting the result with \(S_\alpha\). 

Our computations rely solely on Voronoi diagrams, Minkowski sums and differences with disks, intersections and unions of polygons, all of which can be found in standard books or surveys~\cite{agarwal2008state,aurenhammer2013voronoi,de2008} and an intermediate shape can be calculated in \(O(n^2\log n)\) time. 

\section{Experiments}
We compare the dilation, Voronoi and mixed morphs experimentally on three data sets. The first data set is a collection of outlines of animals taken from~\cite{bouts2016mapping}. The second is a selection of the outlines of European countries obtained from the Thematic Mapping World Borders data set;\footnote{\url{http://www.thematicmapping.org/downloads/world_borders.php}} we use the outlines of Austria, Belgium, Croatia, Czechia, France, Germany, Greece, Ireland, Italy, the Netherlands, Poland, Spain and Sweden. For these two sets we compute the morphs for all pairs of animals and all pairs of countries in the sets. None of the three morphs is translation-invariant or scale-invariant, so it matters where we place the shapes with respect to each other and what sizes they initially have.
We choose to scale
the shapes to have the same area and translate them to have a common centroid. 

The third data set is a small collection of words and letters manually traced as polygons. We use three pairs of words (wish/luck, kick/stuff, try/it), and the letters f, i and u in a serif and a sans serif font. Observe that our morphs could in theory be used to define an infinite family of fonts by interpolating between the glyphs of each element. For these experiments we do not scale the shapes but use the font size, and we align them manually.

For each experiment, we measure the area, perimeter, number of components and number of holes of the morph for \(\alpha\) values starting at zero and increasing in steps of \(1/8\). The parameter \(\phi\) of the mixed morph was universally set to \(0.02\) based on preliminary experimentation.

It is not necessarily insightful to compare areas and especially perimeters between experiments. To make the results more comparable, we make the assumption that an ideal morph linearly interpolates the area and perimeter between those of the input shapes. For each experiment, we can then give the ratio between the measured area and perimeter and these ``ideal'' values. For the number of components and holes this is less meaningful, as these are discrete values, so we simply record the numbers directly.

Each morphing method was implemented in C++, using Boost\footnote{\url{https://www.boost.org}} to calculate intersections and unions of polygons, Voronoi diagrams, and Minkowski sums. Although efficiency is not the focus of this paper, running all our experiments only took a few minutes in total.

\begin{table}[p]
    \centering
    \caption{The distributions of areas for each morphing method over all experiments for all nine tested values of \(\alpha\), separated by experiment category.}
    \label{tab:summary-area}
    \begin{tabular}{l*{6}{c}}
        \toprule
        & \multicolumn{2}{c}{Dilation} & \multicolumn{2}{c}{Voronoi} & \multicolumn{2}{c}{Mixed}\\
        \cmidrule(lr){2-3}
        \cmidrule(lr){4-5}
        \cmidrule(lr){6-7}
        Category & Mean & Std. Dev. & Mean & Std. Dev. & Mean & Std. Dev.\\
        \midrule
        Animals & 1.977 & 0.763 & 0.969 & 0.024 & 0.986 & 0.019\\
        Countries & 2.249 & 1.498 & 0.960 & 0.039 & 0.987 & 0.039\\
        Text & 2.118 & 1.046 & 0.980 & 0.035 & 0.989 & 0.028\\
        \bottomrule\\
    \end{tabular}
\end{table}

\begin{table}[p]
    \centering
    \caption{The distributions of perimeters for each morphing method over all experiments for all nine tested values of \(\alpha\), separated by experiment category.}
    \label{tab:summary-perimeter}
    \begin{tabular}{l*{6}{c}}
        \toprule
        & \multicolumn{2}{c}{Dilation} & \multicolumn{2}{c}{Voronoi} & \multicolumn{2}{c}{Mixed}\\
        \cmidrule(lr){2-3}
        \cmidrule(lr){4-5}
        \cmidrule(lr){6-7}
        Category & Mean & Std. Dev. & Mean & Std. Dev. & Mean & Std. Dev.\\
        \midrule
        Animals & 0.857 & 0.137 & 1.725 & 0.432 & 1.183 & 0.155\\
        Countries & 0.876 & 0.237 & 1.610 & 0.471 & 1.129 & 0.184\\
        Text & 0.955 & 0.142 & 1.401 & 0.418 & 1.155 & 0.192\\
        \bottomrule\\
    \end{tabular}
\end{table}

\begin{table}
    \centering
    \caption{The distributions of the number of components and holes for each morphing method for all tested values of \(\alpha\) except \(0\) and \(1\). This only includes the animals data set, as these shapes have only one component and no holes.}
    \label{tab:summary-topology}
    \begin{tabular}{l*{6}{c}}
        \toprule
        & \multicolumn{2}{c}{Dilation} & \multicolumn{2}{c}{Voronoi} & \multicolumn{2}{c}{Mixed}\\
        \cmidrule(lr){2-3}
        \cmidrule(lr){4-5}
        \cmidrule(lr){6-7}
        Category & Mean & Std. Dev. & Mean & Std. Dev. & Mean & Std. Dev.\\
        \midrule
        Components & 1.004 & 0.063 & 18.556 & 8.089 & 5.317 & 3.213\\
        Holes & 0.218 & 0.602 & 2.544 & 2.699 & 0.218 & 0.532\\
        \bottomrule\\
    \end{tabular}
\end{table}

\begin{figure}[p]
    \centering
    \begin{tikzpicture}
        \begin{axis}[
            title = {Animals},
            cycle list name = customcycle,
            ymin = 0.6, ymax = 3.7,
            width = 0.5\textwidth,
            height = 0.4\textwidth,
            xlabel = {\(\alpha\)},
            ylabel = {area},
            xtick distance = 0.25,
            ytick distance = 0.5,
            minor x tick num = 1,
            legend entries = {Dilation, Voronoi, Mixed},
            legend columns = 3,
            legend to name = morphtypelegend
        ]

            \addplot+ table [x=Alpha,y=Dilation] {data/animals-aggregated-alpha-area.txt};
            \addplot+ table [x=Alpha,y=Voronoi] {data/animals-aggregated-alpha-area.txt};
            \addplot+ table [x=Alpha,y=Mixed] {data/animals-aggregated-alpha-area.txt};
        \end{axis}
    \end{tikzpicture}%
    \hfill
    \begin{tikzpicture}
        \begin{axis}[
            title = {Countries},
            cycle list name = customcycle,
            ymin = 0.6, ymax = 3.7,
            width = 0.5\textwidth,
            height = 0.4\textwidth,
            xlabel = {\(\alpha\)},
            ylabel = {area},
            xtick distance = 0.25,
            ytick distance = 0.5,
            minor x tick num = 1,
        ]

            \addplot+ table [x=Alpha,y=Dilation] {data/countries-aggregated-alpha-area.txt};
            \addplot+ table [x=Alpha,y=Voronoi] {data/countries-aggregated-alpha-area.txt};
            \addplot+ table [x=Alpha,y=Mixed] {data/countries-aggregated-alpha-area.txt};
        \end{axis}
    \end{tikzpicture}
    \ref{morphtypelegend}
    \caption{The average area over all experiments as a function of \(\alpha\), for both the animals and countries data sets.}
    \label{fig:animals-countries-average-areas}
\end{figure}

\begin{figure}
    \centering
    \begin{tikzpicture}
        \begin{axis}[
            title = {Animals},
            cycle list name = customcycle,
            ymin = 0.6, ymax = 2.1,
            width = 0.5\textwidth,
            height = 0.4\textwidth,
            xlabel = {\(\alpha\)},
            ylabel = {perimeter},
            xtick distance = 0.25,
            ytick = {0.7,1.0,...,2.0},
            minor x tick num = 1,
        ]

            \addplot+ table [x=Alpha,y=Dilation] {data/animals-aggregated-alpha-perimeter.txt};
            \addplot+ table [x=Alpha,y=Voronoi] {data/animals-aggregated-alpha-perimeter.txt};
            \addplot+ table [x=Alpha,y=Mixed] {data/animals-aggregated-alpha-perimeter.txt};
        \end{axis}
    \end{tikzpicture}%
    \hfill
    \begin{tikzpicture}
        \begin{axis}[
            title = {Countries},
            cycle list name = customcycle,
            ymin = 0.6, ymax = 2.1,
            width = 0.5\textwidth,
            height = 0.4\textwidth,
            xlabel = {\(\alpha\)},
            ylabel = {perimeter},
            xtick distance = 0.25,
            ytick = {0.7,1.0,...,2.0},
            minor x tick num = 1,
        ]

            \addplot+ table [x=Alpha,y=Dilation] {data/countries-aggregated-alpha-perimeter.txt};
            \addplot+ table [x=Alpha,y=Voronoi] {data/countries-aggregated-alpha-perimeter.txt};
            \addplot+ table [x=Alpha,y=Mixed] {data/countries-aggregated-alpha-perimeter.txt};
        \end{axis}
    \end{tikzpicture}
    \ref{morphtypelegend}
    \caption{The average perimeter over all experiments as a function of \(\alpha\), for both the animals and countries data sets.}
    \label{fig:animals-countries-average-perimeters}
\end{figure}

\section{Results}
A summary of our measurements of area and perimeter can be seen in \cref{tab:summary-area,tab:summary-perimeter}. A summary of the number of components and holes for only the animals data set can be seen in \cref{tab:summary-topology}; we exclude the other data sets because the inputs have different numbers of components. Topological measurements for all experiments can be viewed in \cref{tab:topology} in \cref{sec:topology-tables}. We note that the Voronoi and mixed morphs sometimes have spurious holes caused by numerical precision issues (e.g., the Voronoi morph should not have an intermediate shape with five holes in our experiment with the letter i). Animations of the different morphs for each experiment can be viewed online.\footnote{\url{https://hausdorff-morphing.github.io}}

In \cref{fig:animals-countries-average-areas} we can see that the average area of the dilation morph quickly grows as \(\alpha\) increases, until reaching its peak at \(\alpha = 1/2\), to about three times the desired size. For the perimeter we see the opposite trend, with the dilation morph typically having a lower perimeter than desired. This is a consequence of the dilation erasing details in the boundary of the input shapes. We can see in \cref{fig:animals-countries-average-perimeters} that this happens more quickly in the experiments with country shapes. This is expected, as most of the country shapes have more sharp coastline features and islands that quickly disappear, whereas the animal shapes are generally smoother and only have one component.

\begin{figure}
    \centering
    \includegraphics[width=0.32\textwidth]{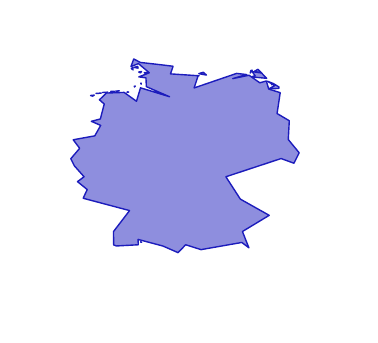}
    \hfill
    \includegraphics[width=0.32\textwidth]{figures/germany.pdf}
    \hfill
    \includegraphics[width=0.32\textwidth]{figures/germany.pdf}

    \includegraphics[width=0.32\textwidth]{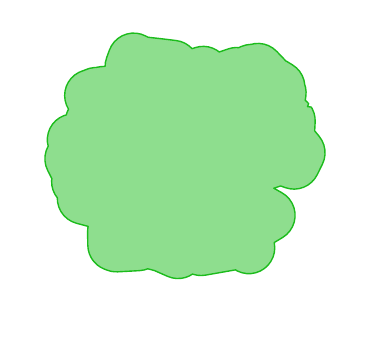}
    \hfill
    \includegraphics[width=0.32\textwidth]{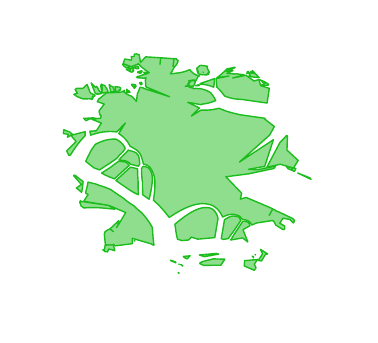}
    \hfill
    \includegraphics[width=0.32\textwidth]{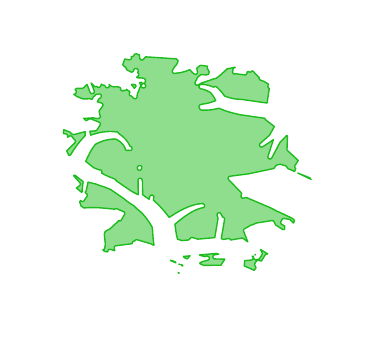}

    \includegraphics[width=0.32\textwidth]{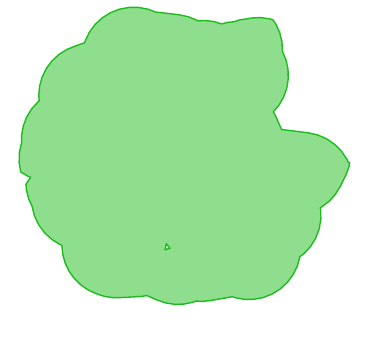}
    \hfill
    \includegraphics[width=0.32\textwidth]{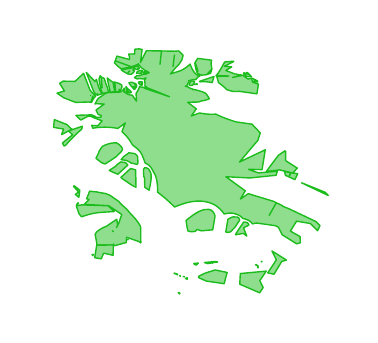}
    \hfill
    \includegraphics[width=0.32\textwidth]{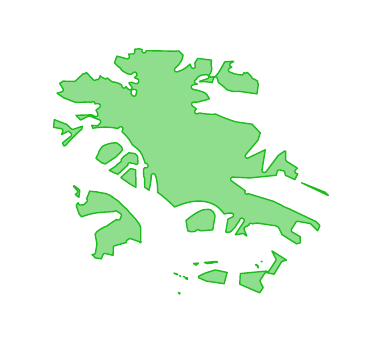}

    \includegraphics[width=0.32\textwidth]{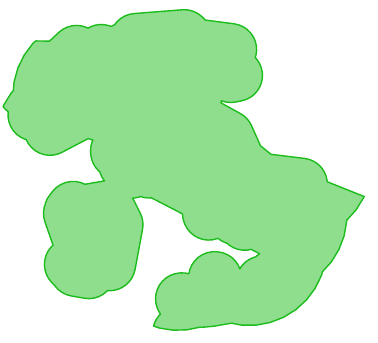}
    \hfill
    \includegraphics[width=0.32\textwidth]{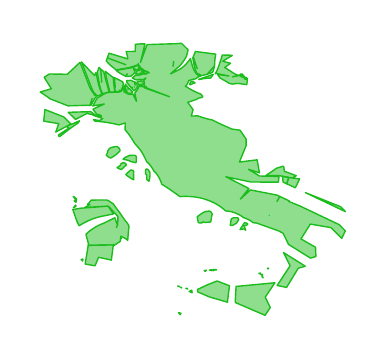}
    \hfill
    \includegraphics[width=0.32\textwidth]{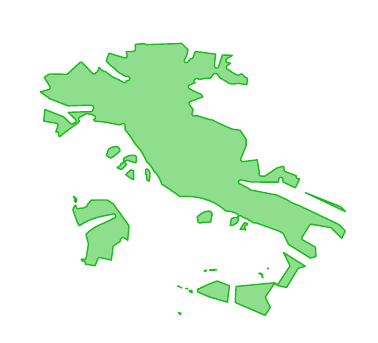}

    \includegraphics[width=0.32\textwidth]{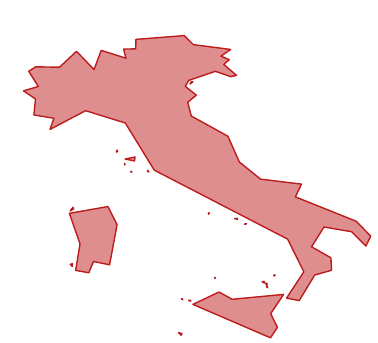}
    \hfill
    \includegraphics[width=0.32\textwidth]{figures/italy.pdf}
    \hfill
    \includegraphics[width=0.32\textwidth]{figures/italy.pdf}
    \caption{Intermediate shapes for \(\alpha \in \{0, 1/4, 1/2, 3/4, 1\}\) when morphing between the outlines of Germany and Italy. The columns show the dilation morph, Voronoi morph and mixed morph from left to right.}
    \label{fig:germany-italy}
\end{figure}

Our Voronoi morph on average has an area that is much closer to the desired value, and with much lower variance than the dilation morph. However, we see that on average the perimeter is much higher than the desired value. This is because points on opposite sides of a Voronoi edge move in different directions, causing new boundaries to appear in the interior of a shape as soon as \(\alpha > 0\). We can see this happen in the middle column of \cref{fig:germany-italy}, and this is reflected in \cref{fig:animals-countries-average-perimeters}, where we see the perimeter sharply increase and then stay mostly the same, before sharply dropping back down.

Our mixed morph achieves its purpose of reducing the perimeter of the Voronoi morph: the measured perimeters are close to the desired values, while the measured areas stay comparable to those of the Voronoi morph. In \cref{fig:animals-countries-average-perimeters}, we see that the perimeter typically still increases during the morphing process, but does not jump up sharply as soon as \(\alpha > 0\). This is because the small value of \(\phi\) lets us close only the narrow gaps that appear around the edges of the Voronoi diagram, but not the gaps that develop as pieces of the shapes move apart significantly. We can see this when comparing the middle and right columns of \cref{fig:germany-italy}: fewer gaps are closed at \(\alpha = 1/2\) than at the other time values.

In addition to area and perimeter, we also tracked the number of components and holes for each morph type. We observe that for the dilation morph, there is an intermediate shape with only one component in all but one of our experiments (see \cref{tab:topology} in \cref{sec:topology-tables}), showing that this morph tends to turn everything into a blob during the morphing process. On the other hand, the Voronoi morph tends to have an intermediate shape with a number of components much larger than either of the input shapes. The mixed morph exhibits neither of these behaviours. This is illustrated in \cref{fig:germany-italy}.

Inspecting the morphs visually (Figures~\ref{fig:germany-italy}--\ref{fig:try-it}, partly in \cref{sec:more-examples}), our mixed morph looks quite reasonable, especially when the area of symmetric difference between the input shapes is small. In many cases, the intermediate shape at \(\alpha = 1/2\) is a recognisable mix of the two input shapes. This is not the case for the dilation morph, where the Hausdorff distance needs to be very small compared to the size of the input shapes for it to look good. 
For instance, when one shape has some small islands far away, the dilation morph will grow to have a very large area, whereas with the Voronoi and mixed morphs, the islands just slowly move towards the closest point on the other shape; see \cref{fig:france-spain} in \cref{sec:more-examples}. 
However, both the Voronoi and mixed morph can still look bad when the area of symmetric difference is large. It may therefore be best to align the input shapes such that the area of symmetric difference is minimised, rather than simply aligning the centroids.

The morphs generally look less convincing on our experiments with text, as the shapes can be very different. For single letters (\cref{fig:i} in \cref{sec:more-examples}) the morphs can look convincing, but when morphing between words, especially of different numbers of letters, the intermediate shape at \(\alpha = 1/2\) does not necessarily resemble both input shapes (\cref{fig:try-it} in \cref{sec:more-examples}). However, the intermediate shapes at \(\alpha = 1/4\) and \(\alpha = 3/4\) still do clearly resemble input shapes \(A\) and \(B\), respectively, for the Voronoi and mixed morphs, but less so for the dilation morph. A better approach to morphing text may be to morph on a per-letter basis, rather than treating the whole text as a single shape. Some strategy would then have to be devised that determines which letter will morph to which, and how to deal with different Hausdorff distances between the letter pairs.

Both the Voronoi morph and the mixed morph often have small parts separating, moving, and then merging somewhere else (for example, the beak in the bird-to-ostrich morphs on \url{https://hausdorff-morphing.github.io}). Such artifacts may be circumvented by choosing a slightly warped Voronoi diagram, but this upsets the simplicity of the current methods.
We can sometimes notice in the animations that the mixed morph is indeed not Lipschitz continuous, but since $\phi$ is rather small, this does not show clearly.

\section{Conclusion}
We introduced a new abstract morphing method based on Voronoi diagrams. This new method satisfies the same bounds on the Hausdorff distance as the previously introduced dilation morph, and is also 1-Lipschitz continuous. We have shown experimentally that the intermediate shapes of the Voronoi morph have areas that more closely match those of the input shapes than the dilation morph, but tends to have a perimeter that is larger than desired. To remedy this, we introduced a variant morph, the mixed morph, that we experimentally show to reduce this problem of increasing the perimeter. This mixed morph still satisfies the bounds on the Hausdorff distance, but is no longer 1-Lipschitz continuous. Our experimental analysis is the first we are aware of that analyses
the development of area, perimeter, number of components and number of holes throughout the morphs.

An interesting open question is whether we can prevent the increase in perimeter caused by the Voronoi morph without losing 1-Lipschitz continuity. This would require somehow anticipating the moment when two pieces of boundary will meet, and smoothly bridging the gap between them over time, instead of just instantly filling it.
To optimise the mixed morph, we can study the effects of choosing different $\phi$, or even changing $\phi$ throughout the morph.
Another direction is to develop other morphs that guarantee a smooth change of some distance measure other than the Hausdorff distance; we noted that it is unclear how to employ the Fr\'echet distance for morphing in the presence of multiple components.

A more practically oriented direction for further research would be to develop a less naive method of filling gaps than the mixed morph. It does not necessarily make sense to use the same radius for the closing operator everywhere, which sometimes closes gaps that will be opened again.
However, any adaptation of this type will disrupt the conceptual simplicity of the Voronoi and mixed morphs.

\bibliography{bibliography}

\clearpage
\appendix

\section{More example morphs}\label{sec:more-examples}
\begin{minipage}{\textwidth}
    \centering
    \captionsetup{type=figure}
    \includegraphics[width=0.32\textwidth]{figures/butterfly.pdf}
    \hfill
    \includegraphics[width=0.32\textwidth]{figures/butterfly.pdf}
    \hfill
    \includegraphics[width=0.32\textwidth]{figures/butterfly.pdf}

    \includegraphics[width=0.32\textwidth]{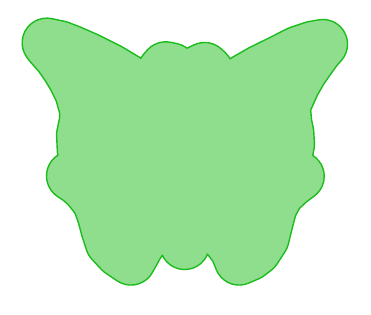}
    \hfill
    \includegraphics[width=0.32\textwidth]{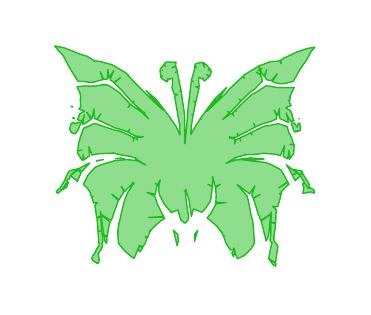}
    \hfill
    \includegraphics[width=0.32\textwidth]{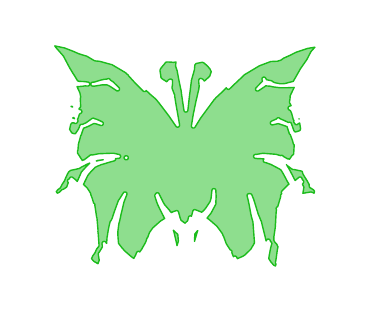}

    \includegraphics[width=0.32\textwidth]{figures/butterfly-spider-dilation-0.5.pdf}
    \hfill
    \includegraphics[width=0.32\textwidth]{figures/butterfly-spider-voronoi-0.5.pdf}
    \hfill
    \includegraphics[width=0.32\textwidth]{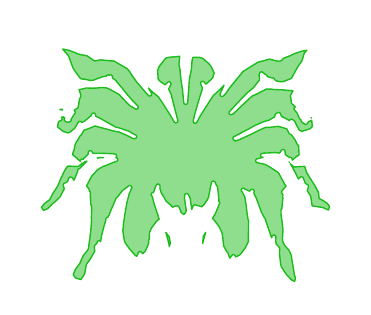}

    \includegraphics[width=0.32\textwidth]{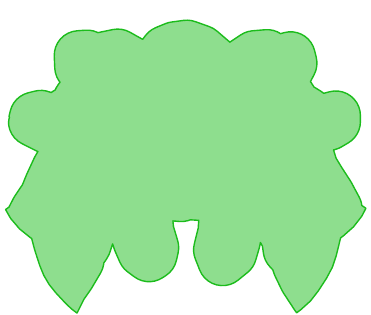}
    \hfill
    \includegraphics[width=0.32\textwidth]{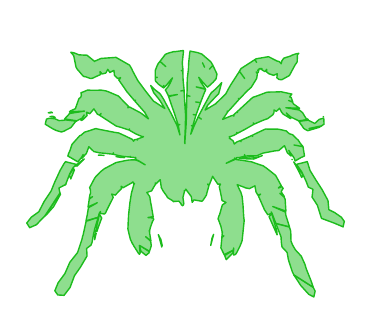}
    \hfill
    \includegraphics[width=0.32\textwidth]{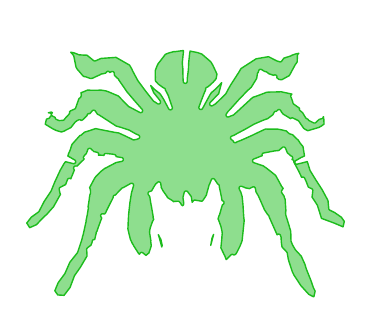}

    \includegraphics[width=0.32\textwidth]{figures/spider.pdf}
    \hfill
    \includegraphics[width=0.32\textwidth]{figures/spider.pdf}
    \hfill
    \includegraphics[width=0.32\textwidth]{figures/spider.pdf}
    \captionof{figure}{Intermediate shapes for \(\alpha \in \{0, 1/4, 1/2, 3/4, 1\}\) when morphing between the outlines of a butterfly and a spider. The columns show the dilation morph, Voronoi morph and mixed morph from left to right.}
    \label{fig:butterfly-spider}
\end{minipage}

\begin{figure}
    \centering
    \includegraphics[width=0.32\textwidth]{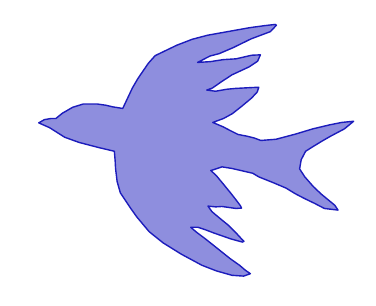}
    \hfill
    \includegraphics[width=0.32\textwidth]{figures/bird.pdf}
    \hfill
    \includegraphics[width=0.32\textwidth]{figures/bird.pdf}

    \includegraphics[width=0.32\textwidth]{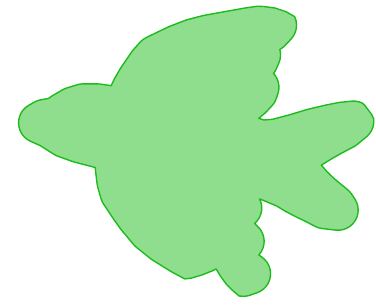}
    \hfill
    \includegraphics[width=0.32\textwidth]{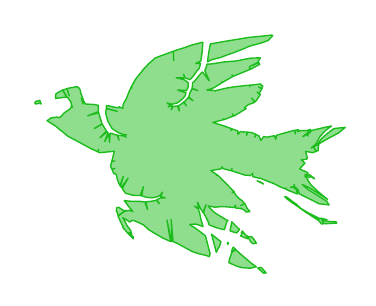}
    \hfill
    \includegraphics[width=0.32\textwidth]{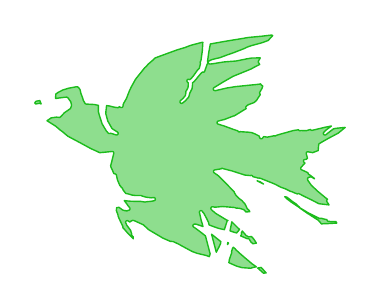}

    \includegraphics[width=0.32\textwidth]{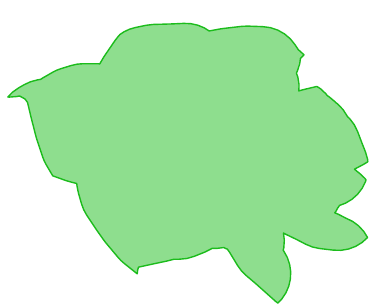}
    \hfill
    \includegraphics[width=0.32\textwidth]{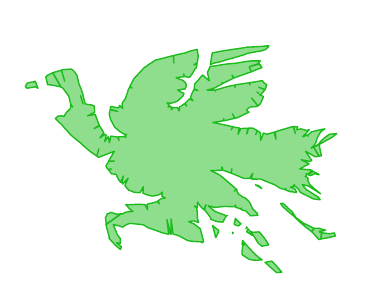}
    \hfill
    \includegraphics[width=0.32\textwidth]{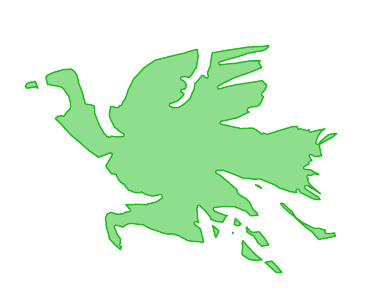}

    \includegraphics[width=0.32\textwidth]{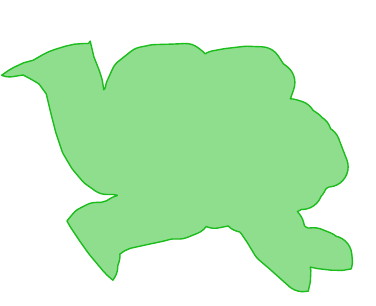}
    \hfill
    \includegraphics[width=0.32\textwidth]{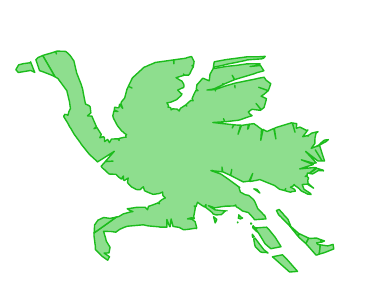}
    \hfill
    \includegraphics[width=0.32\textwidth]{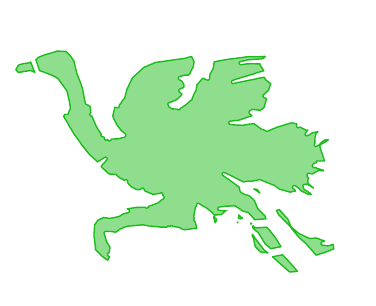}

    \includegraphics[width=0.32\textwidth]{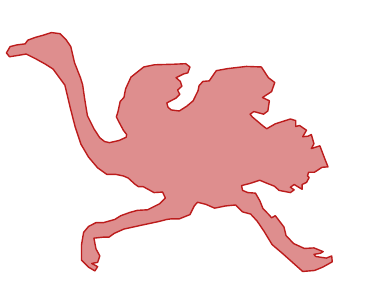}
    \hfill
    \includegraphics[width=0.32\textwidth]{figures/ostrich.pdf}
    \hfill
    \includegraphics[width=0.32\textwidth]{figures/ostrich.pdf}
    \caption{Intermediate shapes for \(\alpha \in \{0, 1/4, 1/2, 3/4, 1\}\) when morphing between the outlines of a bird and an ostrich. The columns show the dilation morph, Voronoi morph and mixed morph from left to right.}
    \label{fig:bird-ostrich}
\end{figure}

\begin{figure}
    \centering
    \includegraphics[width=0.32\textwidth]{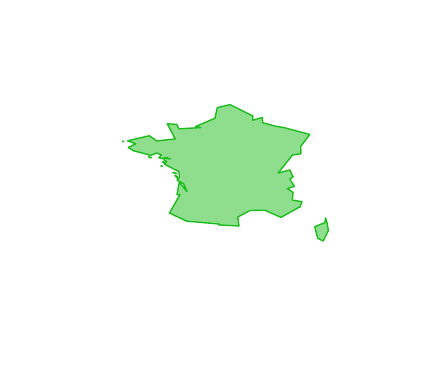}
    \hfill
    \includegraphics[width=0.32\textwidth]{figures/france.pdf}
    \hfill
    \includegraphics[width=0.32\textwidth]{figures/france.pdf}

    \includegraphics[width=0.32\textwidth]{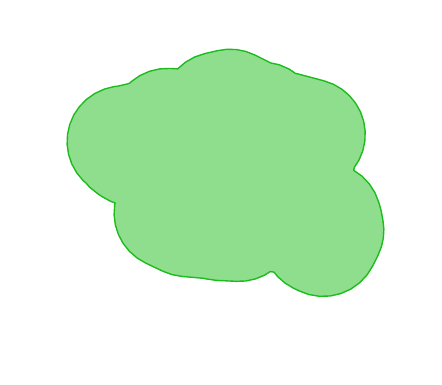}
    \hfill
    \includegraphics[width=0.32\textwidth]{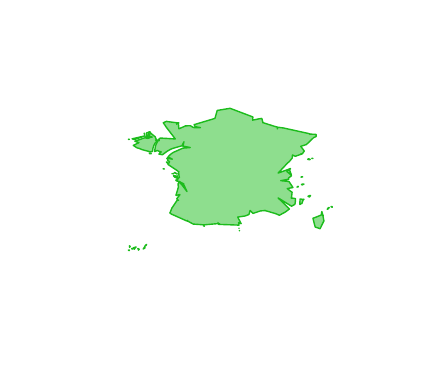}
    \hfill
    \includegraphics[width=0.32\textwidth]{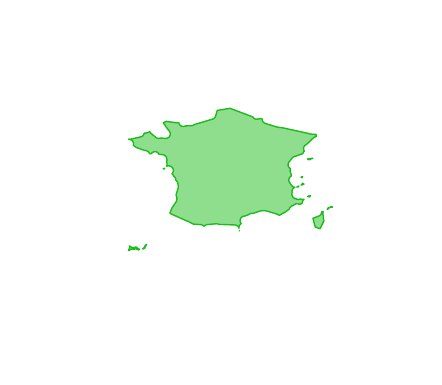}

    \includegraphics[width=0.32\textwidth]{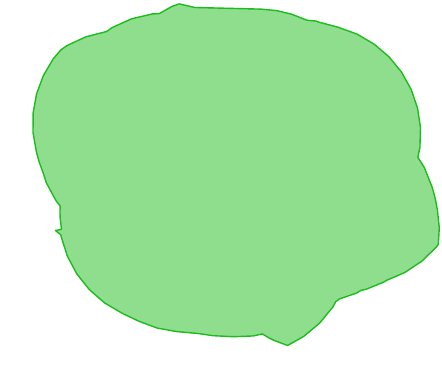}
    \hfill
    \includegraphics[width=0.32\textwidth]{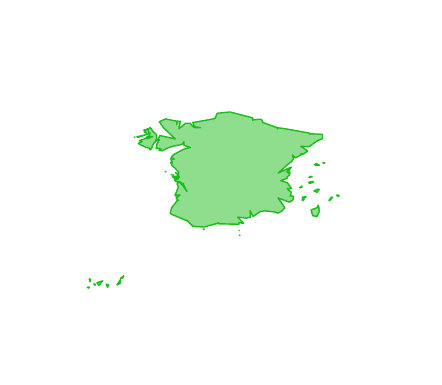}
    \hfill
    \includegraphics[width=0.32\textwidth]{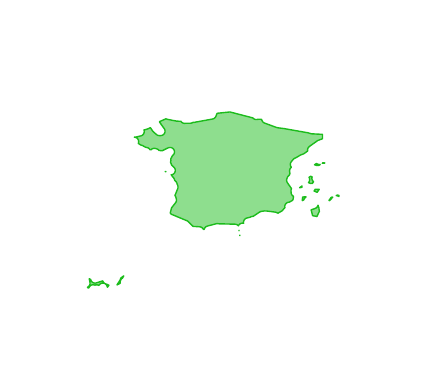}

    \includegraphics[width=0.32\textwidth]{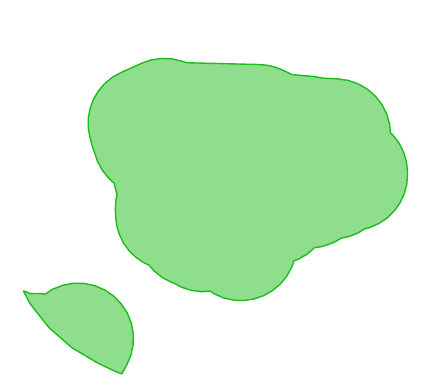}
    \hfill
    \includegraphics[width=0.32\textwidth]{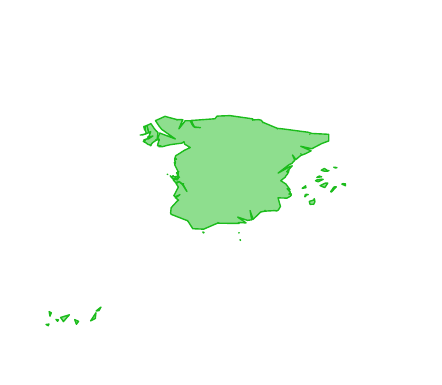}
    \hfill
    \includegraphics[width=0.32\textwidth]{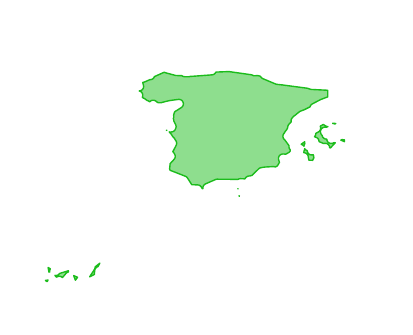}

    \includegraphics[width=0.32\textwidth]{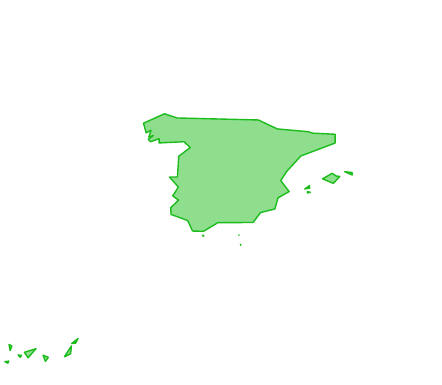}
    \hfill
    \includegraphics[width=0.32\textwidth]{figures/spain.pdf}
    \hfill
    \includegraphics[width=0.32\textwidth]{figures/spain.pdf}
    \caption{Intermediate shapes for \(\alpha \in \{0, 1/4, 1/2, 3/4, 1\}\) when morphing between the outlines of France and Spain. The columns show the dilation morph, Voronoi morph and mixed morph from left to right.}
    \label{fig:france-spain}
\end{figure}

\begin{figure}
    \centering
    \hspace{1cm}
    \includegraphics[width=0.07\textwidth]{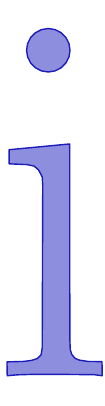}
    \hfill
    \includegraphics[width=0.07\textwidth]{figures/i-serif.pdf}
    \hfill
    \includegraphics[width=0.07\textwidth]{figures/i-serif.pdf}
    \hspace{1cm}

    \hspace{1cm}
    \includegraphics[width=0.07\textwidth]{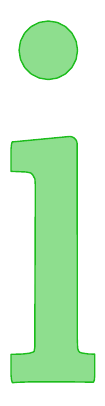}
    \hfill
    \includegraphics[width=0.07\textwidth]{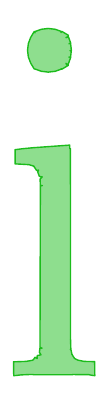}
    \hfill
    \includegraphics[width=0.07\textwidth]{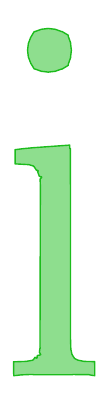}
    \hspace{1cm}

    \hspace{1cm}
    \includegraphics[width=0.07\textwidth]{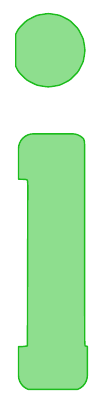}
    \hfill
    \includegraphics[width=0.07\textwidth]{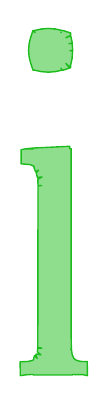}
    \hfill
    \includegraphics[width=0.07\textwidth]{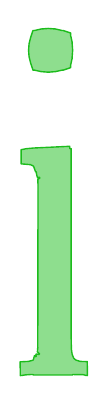}
    \hspace{1cm}

    \hspace{1cm}
    \includegraphics[width=0.07\textwidth]{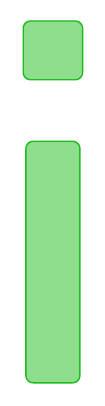}
    \hfill
    \includegraphics[width=0.07\textwidth]{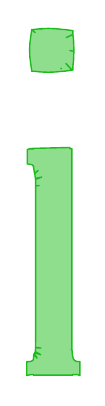}
    \hfill
    \includegraphics[width=0.07\textwidth]{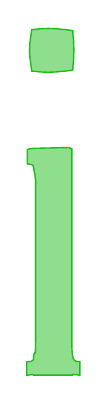}
    \hspace{1cm}

    \hspace{1cm}
    \includegraphics[width=0.07\textwidth]{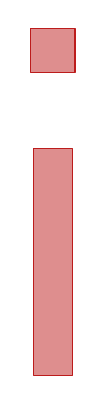}
    \hfill
    \includegraphics[width=0.07\textwidth]{figures/i-sans.pdf}
    \hfill
    \includegraphics[width=0.07\textwidth]{figures/i-sans.pdf}
    \hspace{1cm}
    \caption{Intermediate shapes for \(\alpha \in \{0, 1/4, 1/2, 3/4, 1\}\) when morphing between the outlines of the letter i in two different fonts. The columns show the dilation morph, Voronoi morph and mixed morph from left to right. Note that some artefacts in the Voronoi and mixed morphs, such as on the i's dot, are caused by having polygonal input instead of smooth curves.}
    \label{fig:i}
\end{figure}

\begin{figure}
    \centering
    \includegraphics[width=0.3\textwidth]{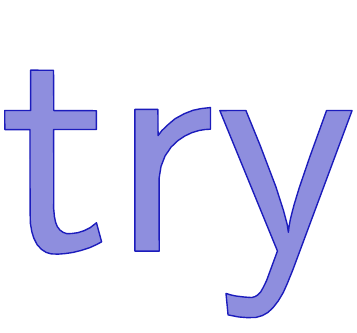}
    \hfill
    \includegraphics[width=0.3\textwidth]{figures/try.pdf}
    \hfill
    \includegraphics[width=0.3\textwidth]{figures/try.pdf}

    \includegraphics[width=0.3\textwidth]{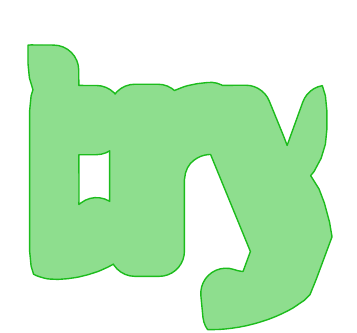}
    \hfill
    \includegraphics[width=0.3\textwidth]{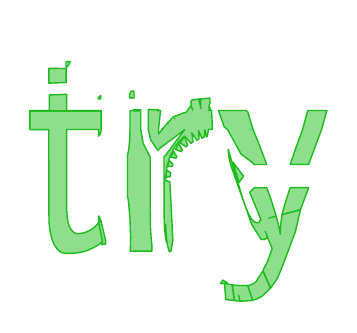}
    \hfill
    \includegraphics[width=0.3\textwidth]{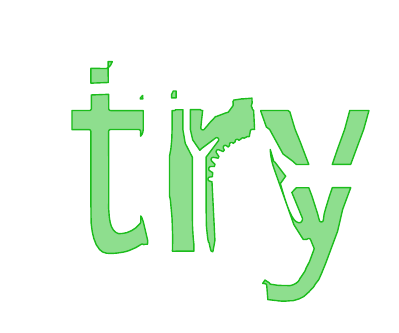}

    \includegraphics[width=0.3\textwidth]{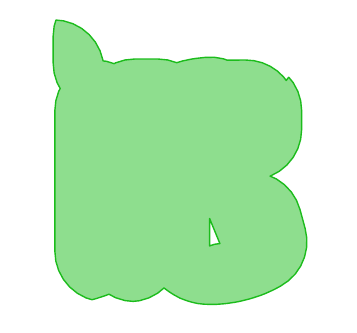}
    \hfill
    \includegraphics[width=0.3\textwidth]{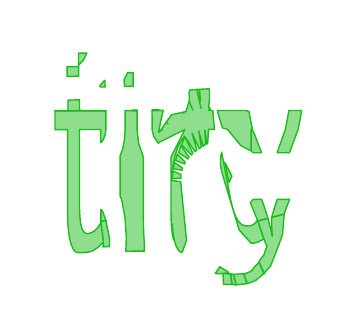}
    \hfill
    \includegraphics[width=0.3\textwidth]{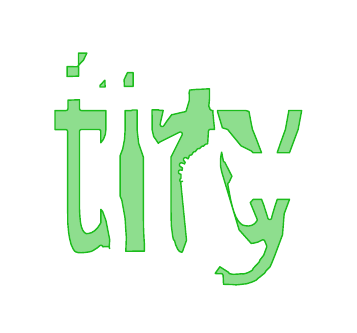}

    \includegraphics[width=0.3\textwidth]{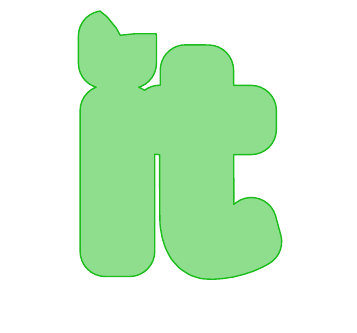}
    \hfill
    \includegraphics[width=0.3\textwidth]{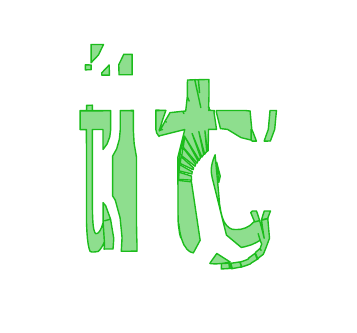}
    \hfill
    \includegraphics[width=0.3\textwidth]{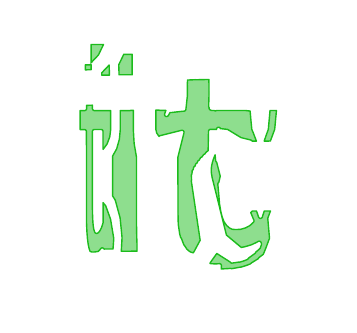}

    \includegraphics[width=0.3\textwidth]{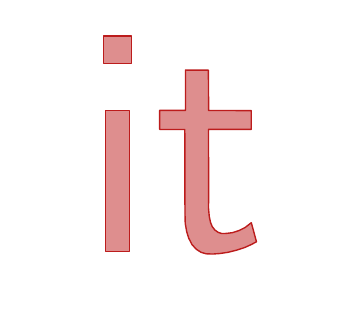}
    \hfill
    \includegraphics[width=0.3\textwidth]{figures/it.pdf}
    \hfill
    \includegraphics[width=0.3\textwidth]{figures/it.pdf}
    \caption{Intermediate shapes for \(\alpha \in \{0, 1/4, 1/2, 3/4, 1\}\) when morphing between the outlines of the words try and it. The columns show the dilation morph, Voronoi morph and mixed morph from left to right. Note that some artefacts in the Voronoi and mixed morphs, such as in the curved part of the letter r, are caused by having polygonal input instead of smooth curves.}
    \label{fig:try-it}
\end{figure}

\clearpage

\section{Topology tables}\label{sec:topology-tables}
\begin{longtable}{l*{10}{c}}
    \caption{The minimum and maximum number of components and the maximum number of holes for each experiment, separated by morph type.}\label{tab:topology}\\
    \toprule
    & \multicolumn{3}{c}{Dilation} & \multicolumn{3}{c}{Voronoi} & \multicolumn{3}{c}{Mixed}\\
    \cmidrule(lr){2-4}
    \cmidrule(lr){5-7}
    \cmidrule(lr){8-10}
    Experiment & min & max & holes & min & max & holes & min & max & holes\\
    \midrule
    \endfirsthead
    \caption[]{(continued from last page)}\\
    \toprule
    & \multicolumn{3}{c}{Dilation} & \multicolumn{3}{c}{Voronoi} & \multicolumn{3}{c}{Mixed}\\
    \cmidrule(lr){2-4}
    \cmidrule(lr){5-7}
    \cmidrule(lr){8-10}
    Experiment & min & max & holes & min & max & holes & min & max & holes\\
    \midrule
    \endhead
    \bottomrule \\
    \endfoot
    bird \(\rightarrow\) butterfly & 1 & 1 & 0 & 1 & 11 & 6 & 1 & 4 & 1\\
    bird \(\rightarrow\) cat & 1 & 1 & 1 & 1 & 12 & 6 & 1 & 5 & 0\\
    bird \(\rightarrow\) dog & 1 & 1 & 1 & 1 & 23 & 4 & 1 & 7 & 1\\
    bird \(\rightarrow\) horse & 1 & 2 & 1 & 1 & 27 & 6 & 1 & 8 & 2\\
    bird \(\rightarrow\) ostrich & 1 & 1 & 0 & 1 & 26 & 5 & 1 & 12 & 0\\
    bird \(\rightarrow\) shark & 1 & 1 & 0 & 1 & 14 & 4 & 1 & 6 & 0\\
    bird \(\rightarrow\) spider & 1 & 1 & 0 & 1 & 30 & 8 & 1 & 9 & 1\\
    bird \(\rightarrow\) turtle & 1 & 1 & 0 & 1 & 21 & 3 & 1 & 9 & 1\\
    butterfly \(\rightarrow\) cat & 1 & 1 & 1 & 1 & 6 & 2 & 1 & 3 & 1\\
    butterfly \(\rightarrow\) dog & 1 & 1 & 1 & 1 & 9 & 4 & 1 & 4 & 1\\
    butterfly \(\rightarrow\) horse & 1 & 1 & 1 & 1 & 28 & 3 & 1 & 9 & 2\\
    butterfly \(\rightarrow\) ostrich & 1 & 1 & 1 & 1 & 11 & 7 & 1 & 4 & 2\\
    butterfly \(\rightarrow\) shark & 1 & 1 & 0 & 1 & 8 & 4 & 1 & 3 & 1\\
    butterfly \(\rightarrow\) spider & 1 & 1 & 1 & 1 & 30 & 9 & 1 & 10 & 2\\
    butterfly \(\rightarrow\) turtle & 1 & 1 & 0 & 1 & 13 & 6 & 1 & 3 & 2\\
    cat \(\rightarrow\) dog & 1 & 1 & 1 & 1 & 17 & 1 & 1 & 3 & 1\\
    cat \(\rightarrow\) horse & 1 & 1 & 1 & 1 & 17 & 2 & 1 & 5 & 1\\
    cat \(\rightarrow\) ostrich & 1 & 1 & 0 & 1 & 13 & 4 & 1 & 4 & 0\\
    cat \(\rightarrow\) shark & 1 & 1 & 0 & 1 & 9 & 0 & 1 & 5 & 0\\
    cat \(\rightarrow\) spider & 1 & 1 & 2 & 1 & 29 & 3 & 1 & 7 & 3\\
    cat \(\rightarrow\) turtle & 1 & 1 & 0 & 1 & 14 & 1 & 1 & 7 & 0\\
    dog \(\rightarrow\) horse & 1 & 1 & 2 & 1 & 31 & 4 & 1 & 9 & 1\\
    dog \(\rightarrow\) ostrich & 1 & 1 & 1 & 1 & 22 & 2 & 1 & 7 & 1\\
    dog \(\rightarrow\) shark & 1 & 1 & 1 & 1 & 17 & 3 & 1 & 6 & 1\\
    dog \(\rightarrow\) spider & 1 & 1 & 3 & 1 & 32 & 2 & 1 & 14 & 2\\
    dog \(\rightarrow\) turtle & 1 & 1 & 1 & 1 & 16 & 1 & 1 & 6 & 0\\
    horse \(\rightarrow\) ostrich & 1 & 1 & 2 & 1 & 27 & 7 & 1 & 15 & 1\\
    horse \(\rightarrow\) shark & 1 & 1 & 1 & 1 & 22 & 5 & 1 & 7 & 1\\
    horse \(\rightarrow\) spider & 1 & 1 & 4 & 1 & 38 & 3 & 1 & 9 & 1\\
    horse \(\rightarrow\) turtle & 1 & 1 & 1 & 1 & 22 & 4 & 1 & 12 & 0\\
    ostrich \(\rightarrow\) shark & 1 & 1 & 0 & 1 & 15 & 8 & 1 & 5 & 0\\
    ostrich \(\rightarrow\) spider & 1 & 1 & 4 & 1 & 38 & 9 & 1 & 17 & 0\\
    ostrich \(\rightarrow\) turtle & 1 & 1 & 0 & 1 & 21 & 12 & 1 & 4 & 0\\
    shark \(\rightarrow\) spider & 1 & 1 & 0 & 1 & 23 & 2 & 1 & 5 & 2\\
    shark \(\rightarrow\) turtle & 1 & 1 & 0 & 1 & 11 & 3 & 1 & 3 & 0\\
    spider \(\rightarrow\) turtle & 1 & 1 & 4 & 1 & 25 & 14 & 1 & 8 & 3\\
    austria \(\rightarrow\) belgium & 1 & 1 & 0 & 1 & 2 & 2 & 1 & 1 & 0\\
    austria \(\rightarrow\) croatia & 1 & 19 & 0 & 1 & 24 & 2 & 1 & 19 & 3\\
    austria \(\rightarrow\) czechia & 1 & 1 & 0 & 1 & 2 & 2 & 1 & 1 & 0\\
    austria \(\rightarrow\) france & 1 & 10 & 0 & 1 & 16 & 2 & 1 & 10 & 1\\
    austria \(\rightarrow\) germany & 1 & 20 & 0 & 1 & 20 & 2 & 1 & 20 & 0\\
    austria \(\rightarrow\) greece & 1 & 68 & 4 & 1 & 79 & 2 & 1 & 68 & 2\\
    austria \(\rightarrow\) ireland & 1 & 4 & 0 & 1 & 12 & 1 & 1 & 4 & 1\\
    austria \(\rightarrow\) italy & 1 & 22 & 1 & 1 & 33 & 1 & 1 & 22 & 1\\
    austria \(\rightarrow\) netherlands & 1 & 9 & 1 & 1 & 15 & 2 & 1 & 9 & 2\\
    austria \(\rightarrow\) poland & 1 & 1 & 0 & 1 & 2 & 2 & 1 & 2 & 0\\
    austria \(\rightarrow\) spain & 1 & 15 & 0 & 1 & 23 & 2 & 1 & 15 & 1\\
    austria \(\rightarrow\) sweden & 1 & 19 & 0 & 1 & 26 & 2 & 1 & 19 & 0\\
    belgium \(\rightarrow\) croatia & 1 & 19 & 1 & 1 & 26 & 3 & 1 & 19 & 4\\
    belgium \(\rightarrow\) czechia & 1 & 1 & 0 & 1 & 1 & 3 & 1 & 1 & 0\\
    belgium \(\rightarrow\) france & 1 & 10 & 0 & 1 & 13 & 1 & 1 & 10 & 2\\
    belgium \(\rightarrow\) germany & 1 & 20 & 0 & 1 & 21 & 2 & 1 & 20 & 2\\
    belgium \(\rightarrow\) greece & 1 & 68 & 6 & 1 & 81 & 1 & 1 & 68 & 4\\
    belgium \(\rightarrow\) ireland & 1 & 4 & 0 & 1 & 9 & 1 & 1 & 4 & 2\\
    belgium \(\rightarrow\) italy & 1 & 22 & 1 & 1 & 30 & 1 & 1 & 22 & 1\\
    belgium \(\rightarrow\) netherlands & 1 & 9 & 1 & 1 & 13 & 3 & 1 & 9 & 2\\
    belgium \(\rightarrow\) poland & 1 & 1 & 0 & 1 & 2 & 2 & 1 & 1 & 0\\
    belgium \(\rightarrow\) spain & 1 & 15 & 0 & 1 & 18 & 2 & 1 & 15 & 0\\
    belgium \(\rightarrow\) sweden & 1 & 19 & 2 & 1 & 25 & 1 & 1 & 19 & 1\\
    croatia \(\rightarrow\) czechia & 1 & 19 & 0 & 1 & 24 & 3 & 1 & 19 & 4\\
    croatia \(\rightarrow\) france & 1 & 19 & 2 & 10 & 47 & 1 & 8 & 22 & 1\\
    croatia \(\rightarrow\) germany & 1 & 20 & 0 & 19 & 55 & 0 & 6 & 20 & 3\\
    croatia \(\rightarrow\) greece & 1 & 68 & 3 & 19 & 121 & 4 & 19 & 68 & 4\\
    croatia \(\rightarrow\) ireland & 1 & 19 & 0 & 4 & 34 & 5 & 3 & 19 & 4\\
    croatia \(\rightarrow\) italy & 1 & 22 & 1 & 19 & 67 & 7 & 19 & 30 & 1\\
    croatia \(\rightarrow\) netherlands & 1 & 19 & 1 & 9 & 39 & 2 & 7 & 19 & 6\\
    croatia \(\rightarrow\) poland & 1 & 19 & 0 & 1 & 34 & 1 & 1 & 19 & 2\\
    croatia \(\rightarrow\) spain & 1 & 19 & 0 & 15 & 51 & 1 & 6 & 19 & 0\\
    croatia \(\rightarrow\) sweden & 1 & 19 & 1 & 19 & 61 & 8 & 10 & 19 & 3\\
    czechia \(\rightarrow\) france & 1 & 10 & 0 & 1 & 13 & 1 & 1 & 10 & 1\\
    czechia \(\rightarrow\) germany & 1 & 20 & 0 & 1 & 22 & 1 & 1 & 20 & 2\\
    czechia \(\rightarrow\) greece & 1 & 68 & 4 & 1 & 85 & 0 & 1 & 68 & 2\\
    czechia \(\rightarrow\) ireland & 1 & 4 & 0 & 1 & 7 & 1 & 1 & 4 & 2\\
    czechia \(\rightarrow\) italy & 1 & 22 & 1 & 1 & 25 & 1 & 1 & 22 & 0\\
    czechia \(\rightarrow\) netherlands & 1 & 9 & 2 & 1 & 12 & 1 & 1 & 9 & 2\\
    czechia \(\rightarrow\) poland & 1 & 1 & 0 & 1 & 2 & 1 & 1 & 1 & 0\\
    czechia \(\rightarrow\) spain & 1 & 15 & 0 & 1 & 16 & 1 & 1 & 15 & 0\\
    czechia \(\rightarrow\) sweden & 1 & 19 & 1 & 1 & 30 & 1 & 1 & 19 & 2\\
    france \(\rightarrow\) germany & 1 & 20 & 0 & 10 & 39 & 6 & 4 & 20 & 2\\
    france \(\rightarrow\) greece & 1 & 68 & 2 & 10 & 108 & 6 & 10 & 68 & 5\\
    france \(\rightarrow\) ireland & 1 & 10 & 0 & 4 & 20 & 3 & 3 & 10 & 2\\
    france \(\rightarrow\) italy & 1 & 22 & 1 & 10 & 52 & 6 & 10 & 25 & 0\\
    france \(\rightarrow\) netherlands & 1 & 10 & 1 & 9 & 25 & 4 & 6 & 11 & 2\\
    france \(\rightarrow\) poland & 1 & 10 & 0 & 1 & 11 & 12 & 1 & 10 & 2\\
    france \(\rightarrow\) spain & 1 & 15 & 0 & 10 & 38 & 5 & 4 & 16 & 0\\
    france \(\rightarrow\) sweden & 1 & 19 & 1 & 10 & 39 & 4 & 8 & 19 & 1\\
    germany \(\rightarrow\) greece & 1 & 68 & 4 & 20 & 104 & 30 & 9 & 68 & 2\\
    germany \(\rightarrow\) ireland & 1 & 20 & 0 & 4 & 31 & 6 & 4 & 20 & 3\\
    germany \(\rightarrow\) italy & 1 & 22 & 1 & 20 & 57 & 26 & 11 & 22 & 2\\
    germany \(\rightarrow\) netherlands & 1 & 20 & 1 & 9 & 43 & 14 & 6 & 20 & 3\\
    germany \(\rightarrow\) poland & 1 & 20 & 2 & 1 & 26 & 6 & 1 & 20 & 2\\
    germany \(\rightarrow\) spain & 1 & 20 & 0 & 15 & 39 & 16 & 2 & 20 & 0\\
    germany \(\rightarrow\) sweden & 1 & 20 & 1 & 19 & 48 & 18 & 11 & 20 & 5\\
    greece \(\rightarrow\) ireland & 1 & 68 & 4 & 4 & 79 & 12 & 3 & 68 & 4\\
    greece \(\rightarrow\) italy & 1 & 68 & 3 & 22 & 131 & 19 & 22 & 68 & 3\\
    greece \(\rightarrow\) netherlands & 1 & 68 & 6 & 9 & 85 & 21 & 7 & 68 & 5\\
    greece \(\rightarrow\) poland & 1 & 68 & 6 & 1 & 84 & 17 & 1 & 68 & 2\\
    greece \(\rightarrow\) spain & 1 & 68 & 0 & 15 & 111 & 11 & 14 & 68 & 3\\
    greece \(\rightarrow\) sweden & 1 & 68 & 3 & 19 & 125 & 11 & 12 & 68 & 2\\
    ireland \(\rightarrow\) italy & 1 & 22 & 1 & 4 & 36 & 4 & 4 & 22 & 2\\
    ireland \(\rightarrow\) netherlands & 1 & 9 & 2 & 4 & 21 & 4 & 3 & 10 & 3\\
    ireland \(\rightarrow\) poland & 1 & 4 & 0 & 1 & 7 & 8 & 1 & 4 & 2\\
    ireland \(\rightarrow\) spain & 1 & 15 & 0 & 4 & 25 & 4 & 3 & 15 & 0\\
    ireland \(\rightarrow\) sweden & 1 & 19 & 0 & 4 & 31 & 4 & 4 & 19 & 3\\
    italy \(\rightarrow\) netherlands & 1 & 22 & 2 & 9 & 42 & 15 & 9 & 22 & 2\\
    italy \(\rightarrow\) poland & 1 & 22 & 1 & 1 & 29 & 14 & 1 & 22 & 1\\
    italy \(\rightarrow\) spain & 1 & 22 & 0 & 15 & 51 & 12 & 12 & 22 & 0\\
    italy \(\rightarrow\) sweden & 1 & 22 & 1 & 19 & 68 & 13 & 13 & 22 & 0\\
    netherlands \(\rightarrow\) poland & 1 & 9 & 3 & 1 & 13 & 6 & 1 & 9 & 2\\
    netherlands \(\rightarrow\) spain & 1 & 15 & 0 & 9 & 29 & 5 & 5 & 15 & 2\\
    netherlands \(\rightarrow\) sweden & 1 & 19 & 2 & 9 & 39 & 5 & 6 & 19 & 2\\
    poland \(\rightarrow\) spain & 1 & 15 & 0 & 1 & 17 & 6 & 1 & 15 & 0\\
    poland \(\rightarrow\) sweden & 1 & 19 & 1 & 1 & 29 & 5 & 1 & 19 & 1\\
    spain \(\rightarrow\) sweden & 1 & 19 & 0 & 15 & 47 & 3 & 4 & 19 & 0\\
    wish \(\rightarrow\) luck & 1 & 5 & 2 & 4 & 44 & 5 & 4 & 22 & 0\\
    kick \(\rightarrow\) stuff & 1 & 5 & 3 & 5 & 29 & 6 & 5 & 18 & 0\\
    try \(\rightarrow\) it & 1 & 3 & 1 & 3 & 27 & 4 & 3 & 11 & 0\\
    f serif \(\rightarrow\) f sans & 1 & 1 & 0 & 1 & 1 & 3 & 1 & 1 & 0\\
    i serif \(\rightarrow\) i sans & 2 & 2 & 0 & 2 & 2 & 5 & 2 & 2 & 0\\
    u serif \(\rightarrow\) u sans & 1 & 1 & 0 & 1 & 1 & 2 & 1 & 1 & 0\\
\end{longtable}

\end{document}